\newcommand{\tabincell}[2]{\begin{tabular}{@{}#1@{}}#2\end{tabular}}
\newtheorem{theorem}{Theorem}
\newtheorem{example}{Example}
\newtheorem{lemma}{Lemma}
\newtheorem{definition}{Definition}
\begin{document}
\title{Differential Spectrum and Boomerang Spectrum of Some Power Mapping}

\author{Yuehui Cui,\mbox{ } Jinquan Luo}
\date{School of Mathematics and Statistics,
 Hubei Key Laboratory of Mathematical Sciences, 
 Central China Normal University,  Wuhan 430079, China}
\maketitle
\insert\footins{\small{\it Email addresses} :luojinquan@mail.ccnu.edu.cn (Jinquan Luo); hfcyh1@163.com (Yuehui Cui).}
{\centering\section*{Abstract}}
 \addcontentsline{toc}{section}{\protect Abstract}
 \setcounter{equation}{0}

Let $f(x)=x^{s(p^m-1)}$ be a power mapping over $\mathbb{F}_{p^n}$, where $n=2m$ and $\gcd(s,p^m+1)=t$. In \cite{kpm-1}, Hu et al. determined the differential spectrum and boomerang spectrum of the power function $f$, where $t=1$. So what happens if $t\geq1$? In this paper, we extend the result of \cite{kpm-1} from $t=1$ to general case. We use a different method than in \cite{kpm-1} to determine the differential spectrum and boomerang spectrum of $f$ by studying the number of rational points on some curves. 
This method may be helpful for calculating the differential spectrum and boomerang spectrum of some Niho type power functions.

\medskip
\noindent{\large\bf Keywords: }\medskip Differential spectrum, Boomerang spectrum, Power mapping, Locally-APN function.

\noindent{\bf2010 Mathematics Subject Classification}: 94A60, 11T06.

\section{Introduction}

Substitution boxes (S-boxes) are vital elements in the design of secure cryptographic primitives like block ciphers. Differential attack, introduced by Biham and Shamir \cite{diffatt}, is a key cryptanalytic technique used to evaluate the security of block ciphers. To measure the level of security of a substitution box used in a block cipher against differential attacks, Nyberg \cite{diffuniform} introduced the concept of differential uniformity. Let $\mathbb{F}_{p^n}$ be the finite field with $p^n$ elements, where $p$ is a prime and $n$ is a positive integer. Let $\mathbb{F}_{p^n}^*$ denote the multiplicative group of the finite field $\mathbb{F}_{p^n}$. For any function $f:\mathbb{F}_{p^n}\rightarrow\mathbb{F}_{p^n}$, the derivative function of $f$ in respect to any element $a \in \mathbb{F}_{p^n}$  is a function from $\mathbb{F}_{p^n}$ to $\mathbb{F}_{p^n}$  defined by
\begin{equation*}
  D_af(x)=f(x+a)-f(x), \forall x \in \mathbb{F}_{p^n}.
\end{equation*}
For any $a,b \in \mathbb{F}_{p^n}$, let 
\begin{equation*}
  \delta(a,b)=\#\left\{x \in \mathbb{F}_{p^n} \mid D_af(x)=b \right\}.
\end{equation*}
The differential uniformity of $f$ is defined as
\begin{equation*}
  \delta=\max\left\{\delta(a,b) \mid a \in \mathbb{F}_{p^n}^*, b \in \mathbb{F}_{p^n}  \right\}.
\end{equation*}
If $\delta=1$, then $f$ is called a perfect nonlinear (PN) function. If $\delta=2$, then $f$ is called an almost perfect nonlinear (APN) function. Power functions are often favored as S-boxes candidates due to their simple algebraic structure and lower hardware implementation cost. Their specific algebraic properties also make it easier to analyze their differential properties. When $f(x)=x^d$ is a power mapping,
\begin{equation*}
  (x+a)^d-x^d=b \Leftrightarrow a^d \left (\left(\frac{x}{a}+1\right)^d-\left(\frac{x}{a}\right)^d \right)=b
\end{equation*}
implying that $\delta(a,b)=\delta(1,\frac{b}{a^d})$ for all $a \in \mathbb{F}_{p^n}^*$ and $b \in \mathbb{F}_{p^n}$. Thus, the differential properties of $f(x)=x^d$ are completely determined by the values of $\delta(1,b)$ when $b$ runs through $\mathbb{F}_{p^n}$. So we can have the following definition \cite{firstdef}.

\begin{definition}\rm
  Let $f(x)=x^d$ be a power function from $\mathbb{F}_{p^n}$ to $\mathbb{F}_{p^n}$ with differential uniformity $\delta$. Denote 
\begin{equation*}
  \omega_i=\#\left\{b \in \mathbb{F}_{p^n} \mid \delta(1,b)=i \right\}
\end{equation*}
for $0\leq i\leq\delta$. The differential spectrum of $f(x)=x^d$ is defined as
\begin{equation*}
  \mathbb{DS}=\left\{\omega_i \mid 0\leq i\leq\delta \mbox{ and } \omega_i>0 \right\}.
\end{equation*}\hfill$\square$
\end{definition}
A function $f$ is said to be locally-APN if $\max\left\{\delta(1,b) \mid  b \in \mathbb{F}_{p^n}\setminus\mathbb{F}_{p}  \right\}=2$. The readers can refer to \cite{APN1, locally, APN2, kpm-1} for more information on PN, APN and locally-APN functions. To our knowledge, there are only a limited number of locally-APN functions in the literature. We also have two basic identities \cite{firstdef}.
\begin{equation}\label{two basic ide}
  \sum\limits_{i=0}^{\delta}\omega_i=p^n \mbox{ and } \sum\limits_{i=0}^{\delta}i\omega_i=p^n.
\end{equation}
These two identities are useful in calculating the differential spectrum of $f$. However, finding the differential spectrum of a power mapping remains a theoretical challenge. For known results, the reader can refer to recent (survey) papers \cite{dssurvey} and \cite{dssurvey2}. The differential spectrum of a power mapping has several applications, including cross correlation distribution, Walsh transformation, $t$-designs, and coding theory. For more detailed information, readers can refer to \cite{APN1,APN2,walshcc,appcode2, dsandcc,tdesigns}.

To assess the capability of a cryptosystem in handling the boomerang attack \cite{wagnerda}, a new concept known as boomerang uniformity was introduced by Cid et al. \cite{cidc}. Subsequently, due to the complexity involved in calculating boomerang uniformity based on the original definition, Li et al. \cite{liqu} proposed an equivalent definition that can be extended to any finite field.
\begin{definition}\rm
For $a,b\in\mathbb{F}_{p^n}^*$, let $\beta_{f}(a,b)$ denote the number of solutions in $\mathbb{F}_{p^n}^2$ of the following system of equations
\begin{eqnarray}\label{boom two eq}
    \left \{
\begin{array}{lll}
f(x)-f(y)=b,\\
f(x+a)-f(y+a)=b.\\
\end{array}\right.
\end{eqnarray}
Then the boomerang uniformity of a function $f$ over $\mathbb{F}_{p^n}$ is defined as
\begin{equation*}
  \beta=\max_{a\in\mathbb{F}_{p^n}^*}\max_{b\in\mathbb{F}_{p^n}^*}\beta_{f}(a,b).
\end{equation*}
The boomerang spectrum of a function $f$ over $\mathbb{F}_{p^n}$ is defined as 
\begin{equation*}
  \{\beta_{f}(a,b) : a\in\mathbb{F}_{p^n}^*, b\in\mathbb{F}_{p^n}^*\}.
\end{equation*}
\hfill$\square$
\end{definition}
Similar to the differential spectrum, the boomerang spectrum of a power mapping $f(x)$ over $\mathbb{F}_{p^n}$ with $\beta(f)=\beta$ can be simply defined as 
\begin{equation*}
  \mathbb{BS}=\left\{\nu_i \mid 0\leq i\leq\beta  \mbox{ and } \nu_i>0 \right\},
\end{equation*}
where $\nu_i=\#\left\{b\in\mathbb{F}_{p^n}^* \mid  \beta_f(1,b)=i \right\}$. So far, only some boomerang spectrum of power functions are known, the readers can refer to \cite{edda,quliboom,yanboom} for more information. More significantly, the boomerang spectrums of S-boxes have been demonstrated to be crucial for evaluating their resistance to various forms of boomerang cryptanalysis.

The rest of this paper is organized as follows. In Section \ref{Preliminaries}, we introduce some notations and auxiliary tools. In Section \ref{luo the ds s}, we calculate the differential spectrum of $f(x)=x^{s(p^m-1)}$, where $\gcd(s,p^m+1)\geq1$, $n=2m$. In Section \ref{luo the bs}, we compute the boomerang spectrum of $f$. The conclusive remarks are given in Section \ref{conclusion}. 

For convenience, we fix the following notations in this paper:

\begin{tabular}{ll}
$\mathbb{F}_{p^n}$ & finite field of $p^n$ elements \\
$\mathbb{F}_{p^n}^*$ & multiplicative group of $\mathbb{F}_{p^n}$ \\
$\psi$ & primitive element of $\mathbb{F}_{p^n}$ \\
\end{tabular}

\section{Preliminaries}\label{Preliminaries} 
Before stating the main results, we introduce some notations and auxiliary tools. 
If $x\in\mathbb{F}_{p^n}^*$ and $x=\psi^t$ with $0\leq t<p^n-1$, we denote $ind_{\psi}(x)=t$. Let $\mathbb{F}_{p^n}^{\sharp}=\mathbb{F}_{p^n}\setminus \{0,-1\}$. We denote
\begin{equation}\label{c12}
  C_{j_1,j_2}=\{x\in\mathbb{F}_{p^n}^{\sharp}: ind_{\psi}(x+1)\equiv j_1\,\,({\rm mod}\,\,p^m+1), ind_{\psi}(x)\equiv j_2\,\,({\rm mod}\,\,p^m+1)\},
\end{equation}
where $0\leq j_1,j_2\leq p^m$, $n=2m$. Obviously,
\begin{equation*}
 \bigsqcup\limits_{0\leq j_1,j_2\leq p^m} C_{j_1,j_2}=\mathbb{F}_{p^n}^{\sharp}
\end{equation*}
(here $\bigsqcup$ indicates disjoint union).
  
\begin{lemma}\label{luo wang}
\cite[Theorem 1]{luo111}
Let $p$ be a prime number, $n=2km$ with $k$ and $m$ positive integers. Let $n_1$ and $n_2$ be two positive integers, $t=\gcd(n_1,n_2)$, $lcm(n_1,n_2)\mid p^m+1$. Let $0\leq r_1\leq n_1-1$ and $0\leq r_2\leq n_2-1$. Let $N_{p^n}(\chi)$ be the number of $\mathbb{F}_{p^n}$-rational points on the affine curve 
\begin{equation*}
  \chi: \alpha x^{n_1}+\beta y^{n_2}+1=0.
\end{equation*}
 For $\alpha\in\left\{\psi^{r_1+n_1w}\mid 0\leq w\leq \frac{p^n-1}{n_1}-1 \right\}$, 
 $\beta\in\left\{\psi^{r_2+n_2w}\mid 0\leq w\leq \frac{p^n-1}{n_2}-1 \right\}$. Then we have
\begin{itemize}
	\item[(i)] if $r_1=r_2=0$, then $N_{p^n}(\chi)=p^n+(-1)^{k-1}((n_1-1)(n_2-1)+1-t)p^{\frac{n}{2}}-t+1$.
	\item[(ii)] if $r_1=0$, $r_2\neq0$, $t\nmid r_2$, then $N_{p^n}(\chi)=p^n+(-1)^k(n_1-2)p^{\frac{n}{2}}+1$.
	\item[(iii)] if $r_1\neq0$, $r_2=0$, $t\nmid r_1$, then $N_{p^n}(\chi)=p^n+(-1)^k(n_2-2)p^{\frac{n}{2}}+1$.
	\item[(iv)] if $r_1\neq0$, $r_2\neq0$ and $t\nmid r_1-r_2$, then $N_{p^n}(\chi)=p^n+(-1)^{k-1}2p^{\frac{n}{2}}+1$.
	\item[(v)] if $r_1\neq0$, $r_2\neq0$ and $t\mid r_1-r_2$, then $N_{p^n}(\chi)=p^n+(-1)^{k}(t-2)p^{\frac{n}{2}}-t+1$.
\end{itemize}
\hfill$\square$
\end{lemma}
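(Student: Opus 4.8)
The plan is to count the affine points on $\chi$ with multiplicative characters and reduce everything to Gauss and Jacobi sums, which become explicit here because every character that appears has order dividing $\mathrm{lcm}(n_1,n_2)\mid p^m+1$ --- the \emph{semiprimitive} situation. Write $\varepsilon$ for the trivial multiplicative character of $\mathbb{F}_{p^n}^*$. Using that $z^{n_i}=w$ has $\sum_{\lambda^{n_i}=\varepsilon}\lambda(w)$ solutions for $w\neq 0$ and one solution for $w=0$, I would write
\[
N_{p^n}(\chi)=\sum_{\substack{u,v\in\mathbb{F}_{p^n}\\ u+v=-1}}\#\{x:\alpha x^{n_1}=u\}\cdot\#\{y:\beta y^{n_2}=v\},
\]
peel off the terms with $u=0$ or $v=0$, and rewrite the rest as
\[
\sum_{\lambda_1^{n_1}=\varepsilon}\ \sum_{\lambda_2^{n_2}=\varepsilon}\bar\lambda_1(\alpha)\,\bar\lambda_2(\beta)\sum_{\substack{u+v=-1\\ u,v\neq 0}}\lambda_1(u)\lambda_2(v).
\]
The inner (twisted Jacobi) sum equals $p^n-2$ if $\lambda_1=\lambda_2=\varepsilon$; it vanishes if exactly one of $\lambda_1,\lambda_2$ is trivial; it equals $-\lambda_1(-1)$ if $\lambda_2=\bar\lambda_1\neq\varepsilon$; and otherwise it equals $\lambda_1\lambda_2(-1)\,J(\lambda_1,\lambda_2)$. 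Since $\alpha=\psi^{r_1+n_1w}$ and $\lambda_1^{n_1}=\varepsilon$, one has $\bar\lambda_1(\alpha)=\bar\lambda_1(\psi)^{r_1}$ independently of $w$, which is exactly why only the residues $r_1,r_2$ enter the answer.

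Next I would make the surviving sums explicit. Via $J(\lambda_1,\lambda_2)=g(\lambda_1)g(\lambda_2)/g(\lambda_1\lambda_2)$ (valid when $\lambda_1\lambda_2\neq\varepsilon$), everything reduces to Gauss sums $g(\lambda)$ of characters whose order $e$ divides $\mathrm{lcm}(n_1,n_2)\mid p^m+1$. For such $\lambda$ we have $p^m\equiv-1\pmod e$; if $j$ is least with $p^j\equiv-1\pmod e$ then $j\mid m$ with $m/j$ odd, so $km/j\equiv k\pmod 2$. The classical semiprimitive Gauss sum formula (obtained by Davenport--Hasse lifting from the base field $\mathbb{F}_{p^{2j}}$) then gives $g(\lambda)=(-1)^{k-1}\varepsilon_\lambda\,p^{n/2}$ for an explicit sign $\varepsilon_\lambda\in\{\pm1\}$ depending only on parities of $k$ and of $(p^j+1)/e$; this is the origin of the uniform $(-1)^{k}$, $(-1)^{k-1}$ in the statement. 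Substituting, each nontrivial Jacobi sum collapses to $\pm p^{n/2}$, and the degenerate and $\lambda_2=\bar\lambda_1$ terms contribute only constants.

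What remains is a counting problem. Embedding characters $\eta_1,\eta_2$ of orders $n_1,n_2$ into a single cyclic character group of order $\mathrm{lcm}(n_1,n_2)$ and writing $\lambda_1=\eta_1^{a}$, $\lambda_2=\eta_2^{b}$, I would sort the pairs $(a,b)$ into the four regimes above according to $t=\gcd(n_1,n_2)$ and to whether $r_1,r_2$ vanish and whether $t\mid r_1-r_2$, and evaluate the resulting sums, which after the Gauss sum substitution amount to elementary sums of roots of unity $\bar\eta_1(\psi)^{ar_1}\bar\eta_2(\psi)^{br_2}$ over the admissible indices. The five cases (i)--(v) correspond precisely to which of these sums are ``large'' (essentially a full count) versus ``small'' (vanishing up to a boundary term); the coefficients $(n_1-1)(n_2-1)+1-t$, $n_1-2$, $n_2-2$, $2$, $t-2$ in front of $p^{n/2}$, and the constants $-t+1$ or $+1$ coming from the $\lambda_2=\bar\lambda_1$ and the $u=0$/$v=0$ terms, should all fall out of this bookkeeping.

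I expect the main obstacle to be exactly this last accounting together with pinning down signs: cleanly separating the trivial-character degeneracies, the pairs with $\lambda_1\lambda_2=\varepsilon$, and the precise value of $\varepsilon_\lambda$ in the semiprimitive Gauss sum (which is where the case split and the $(-1)^{k}$ versus $(-1)^{k-1}$ dichotomy are generated). The remaining steps are routine character-sum manipulations.
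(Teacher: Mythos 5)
This lemma is imported verbatim from \cite[Theorem 1]{luo111}; the present paper contains no proof of it, so there is no internal argument to compare yours against. Your strategy---expand the point count in multiplicative characters of orders dividing $n_1$ and $n_2$, reduce to Jacobi sums via $J(\lambda_1,\lambda_2)=g(\lambda_1)g(\lambda_2)/g(\lambda_1\lambda_2)$, and use that every character appearing (including the products $\lambda_1\lambda_2$) has order dividing $\mathrm{lcm}(n_1,n_2)\mid p^m+1$, so that all Gauss sums are semiprimitive and equal $\pm p^{n/2}$---is the standard and correct route to formulas of this shape, and it is essentially the method of the cited source.

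Two points, however. First, a concrete slip: with the inner sum restricted to $u,v\neq 0$, the mixed case $\lambda_1=\varepsilon$, $\lambda_2\neq\varepsilon$ gives $\sum_{v\neq 0,-1}\lambda_2(v)=-\lambda_2(-1)$, not $0$. Summed against $\bar\lambda_2(\beta)$ these terms do not disappear on their own; they must be cancelled against the $u=0$ and $v=0$ boundary contributions (the combination does collapse to a constant $+1$ per boundary, and together with the $\lambda_2=\bar\lambda_1$ block, which contributes $1-t\cdot[\,t\mid r_1-r_2\,]$, this accounts for the constants $+1$ versus $-t+1$ in the five cases). Second, the proposal stops exactly where the theorem's content begins: to get the \emph{uniform} coefficients $(n_1-1)(n_2-1)+1-t$, $n_1-2$, $n_2-2$, $2$, $t-2$ in front of $(-1)^{k-1}p^{n/2}$ one must show that $\lambda_1\lambda_2(-1)\,\varepsilon_{\lambda_1}\varepsilon_{\lambda_2}/\varepsilon_{\lambda_1\lambda_2}=1$ for \emph{every} admissible pair (so that each surviving Jacobi sum equals exactly $(-1)^{k-1}p^{n/2}$), and then evaluate the root-of-unity sums $\sum\bar\eta_1(\psi)^{ar_1}\bar\eta_2(\psi)^{br_2}$ over the index pairs with $\lambda_1,\lambda_2,\lambda_1\lambda_2$ all nontrivial; this is where the dependence on $r_1,r_2$ and on $t\mid r_1-r_2$ is actually generated. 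As written these steps are asserted to ``fall out of the bookkeeping'' rather than carried out, so what you have is a viable plan, not yet a proof.
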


\begin{lemma}\label{doubesame}
Let $p$ be a prime. Let $k,m$ be positive integers, $n=2km$, $\alpha=\psi^{p^m-1}$. Then the equality $\alpha^i+\alpha^j=\alpha^{i'}+\alpha^{j'}$ with $0\leq i\leq j\leq p^m$ and $0\leq i'\leq j'\leq p^m$ implies $i=i'$, $j=j'$. 
\end{lemma}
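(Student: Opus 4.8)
The plan is to show that the \emph{unordered} pair of field elements $\{\alpha^{i},\alpha^{j}\}$ is uniquely recoverable from the single value $\alpha^{i}+\alpha^{j}$; once this is established, the hypotheses $i\le j$ and $i'\le j'$ immediately force $i=i'$ and $j=j'$. As a preliminary, put $q=p^{m}$ and let $N=(p^{n}-1)/(p^{m}-1)=1+q+q^{2}+\cdots+q^{2k-1}$ be the multiplicative order of $\alpha$. Since $N\ge q+1>q$, the map $\ell\mapsto\alpha^{\ell}$ is injective on $\{0,1,\dots,q\}$; in particular $\alpha^{i}=\alpha^{i'}$ already gives $i=i'$, and likewise for $j$ and $j'$, so once we know $\{\alpha^{i},\alpha^{j}\}=\{\alpha^{i'},\alpha^{j'}\}$ we are done. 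Thus it suffices to recover the unordered pair from the sum.

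The heart of the matter is that a suitable power of the Frobenius inverts the cyclic group $\langle\alpha\rangle$, and I would carry this out first in the principal case $n=2m$ (that is, $k=1$), where $N=q+1$ and $\alpha^{q}=\alpha^{-1}$. Write $c=\alpha^{i}+\alpha^{j}=\alpha^{i'}+\alpha^{j'}$ and assume $c\neq 0$. Raising the identity to the $q$-th power and using $(\alpha^{i})^{q}=\alpha^{-i}$ etc. gives $\alpha^{-i}+\alpha^{-j}=\alpha^{-i'}+\alpha^{-j'}=c^{q}$, i.e. $c/\alpha^{i+j}=c/\alpha^{i'+j'}=c^{q}$, whence $\alpha^{i}\alpha^{j}=\alpha^{i+j}=c^{\,1-q}=\alpha^{i'+j'}=\alpha^{i'}\alpha^{j'}$. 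Therefore $\{\alpha^{i},\alpha^{j}\}$ and $\{\alpha^{i'},\alpha^{j'}\}$ are both exactly the root set of the monic quadratic $T^{2}-cT+c^{\,1-q}$, so they coincide, which settles the case $c\neq 0$.

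The point that demands a separate argument --- and that I expect to be the real obstacle --- is the degenerate case $c=\alpha^{i}+\alpha^{j}=0$, where the quadratic above collapses. Here $\alpha^{\,j-i}=-1=\alpha^{\,j'-i'}$, so one must recover $i$ and $j$ from the position of $-1$ inside $\langle\alpha\rangle$ together with the tight exponent range $0\le i\le j\le q$: when $p=2$ this is immediate since $-1=1$ forces $i=j$ and $i'=j'$, while for odd $p$ one uses that $-1$ is the unique element of order $2$ (so $j-i$ is pinned down) and then the range restriction. Finally, for $k\ge 2$ the clean inversion $\alpha^{q}=\alpha^{-1}$ no longer holds, so the Frobenius step has to be replaced by a finer analysis --- for instance of the base-$q$ digit patterns of the exponents $i(q-1)$ viewed in $\mathbb{Z}/(p^{n}-1)\mathbb{Z}$ --- and I would expect that bookkeeping, rather than the quadratic trick itself, to be where the work concentrates.
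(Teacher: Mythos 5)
Your core argument is the paper's own, almost verbatim: raise the identity to the $p^m$-th power, use $\alpha^{p^m}=\alpha^{-1}$ to obtain $\alpha^i\alpha^j=\alpha^{i'}\alpha^{j'}$, and conclude by observing that both unordered pairs form the root multiset of the same monic quadratic (the paper writes $x^2-ax+b$, you write $T^2-cT+c^{1-q}$); the injectivity of $\ell\mapsto\alpha^{\ell}$ on $\{0,\dots,p^m\}$, which you make explicit, is left tacit in the paper. So for $k=1$ and $c=\alpha^i+\alpha^j\neq 0$ your proof is complete and is exactly the intended one.

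The two points you flag as delicate are genuinely delicate, but your treatment does not close either of them. First, the case $c=0$: here the statement is actually false, so no repair is possible. For $p=2$ one has $\alpha^{i}+\alpha^{i}=0=\alpha^{i'}+\alpha^{i'}$ for all $i\neq i'$; for odd $p$ and $k=1$ one has $-1=\alpha^{(p^m+1)/2}$, so every pair $(i,\,i+(p^m+1)/2)$ with $0\le i\le (p^m-1)/2$ has sum $0$ --- the range restriction pins down $j-i$, not $i$, contrary to what you assert. (The paper's two-line proof silently breaks at the same point, since one cannot divide by $c$; in the applications the lemma is only invoked to separate the \emph{nonzero} values $\alpha^{si}-\alpha^{sj}$, with $\delta(0)$ and $\beta(0)$ computed separately, so the statement should really carry the hypothesis $\alpha^i+\alpha^j\neq 0$.) Second, the case $k\ge 2$: you correctly observe that $\alpha^{p^m}=\alpha^{-1}$ fails there (it is equivalent to $p^{2km}-1\mid p^{2m}-1$, i.e.\ to $k=1$), but you then only announce a plan about base-$q$ digit patterns without carrying it out, so this case remains unproved in your write-up --- exactly as it does in the paper, whose proof uses $\alpha^{p^m}=\alpha^{-1}$ without restriction on $k$. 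Since the paper only ever applies the lemma with $n=2m$ and $b\neq 0$, neither gap damages the downstream results, but as a proof of the lemma as stated your proposal establishes the same cases as the paper's and no more, with the added defect that your claimed resolution of $c=0$ is incorrect rather than merely omitted.
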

\begin{proof}
If $\alpha^i+\alpha^j=\alpha^{i'}+\alpha^{j'}$, we get $\alpha^i\alpha^j=\alpha^{i'}\alpha^{j'}$ based on $\alpha^{p^m}=\alpha^{-1}$. Let $a=\alpha^i+\alpha^j$ and $b=\alpha^i\alpha^j$, then both $\{\alpha^i,\alpha^j\}$ and $\{\alpha^{i'},\alpha^{j'}\}$ are two solutions to $x^2-ax+b=0$.
\end{proof}

\section{Differential spectrum of $f(x)=x^{s(p^m-1)}$ over $\mathbb{F}_{p^n}$}\label{luo the ds s}
In \cite{kpm-1}, Hu et al. determined the differential spectrum of the power function $x^{s(p^m-1)}$ by studying the differential equation, where $n=2m$ and $\gcd(s,p^m+1)=1$. So what happens if $\gcd(s,p^m+1)\geq1$? In this section, we investigate the function $f(x)=x^d$ over $\mathbb{F}_{p^n}$, where $d=s(p^m-1)$, $\gcd(s,p^m+1)=t$ and $n=2m$. To determine the differential spectrum  of $x^d$, we mainly study the number of solutions in $\mathbb{F}_{p^n}$ of 
\begin{equation}\label{ds eq s f2}
  (x+1)^d-x^d=b
\end{equation}
for every $b\in \mathbb{F}_{p^n}$. Denote by $\delta(b)$ the number of solutions of (\ref{ds eq s f2}) in $\mathbb{F}_{p^n}$.

\subsection{$p=2$}

In this subsection, we investigate the function $f(x)=x^d$ over $\mathbb{F}_{2^n}$.
\begin{lemma}\label{lemma ds s f2}
Let $\alpha=\psi^{2^m-1}$ and $\frac{2^m+1}{t}>3$. With the notations above, we have 
\begin{itemize}
\item[(i)]  $\delta(0)=t(2^m-1)-1$.
\item[(ii)]  $\delta(1)=2$ for $3t\nmid2^m+1$. $\delta(1)
   =2t^2+2$ for $3t\mid2^m+1$. 
\item[(iii)] $\delta(1+\alpha^{si})
   =2t(t-1)$, where $1\leq i<\frac{2^m+1}{t}$.
\item[(iv)] $\delta(\alpha^{si}+\alpha^{sj})
   =2t^2$, where $1\leq i,j<\frac{2^m+1}{t}$, $i\neq j$ and $\alpha^{si}+\alpha^{sj}\neq1$.
\item[(v)] $\delta(b)=0$ for $b\neq 1,\alpha^{si}+\alpha^{sj}$, where $0\leq i,j<\frac{2^m+1}{t}$. 
\end{itemize}
\end{lemma}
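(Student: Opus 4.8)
The plan is to reduce the counting problem (\ref{ds eq s f2}) with $d = s(2^m-1)$ to counting rational points on curves of the shape $\alpha x^{n_1} + \beta y^{n_2} + 1 = 0$, so that Lemma~\ref{luo wang} applies directly. The first step is to understand the value set of $f(x) = x^{s(2^m-1)}$ on $\mathbb{F}_{2^n}^*$: since $x^{2^m-1}$ maps $\mathbb{F}_{2^n}^*$ onto the group $U$ of $(2^m+1)$-st roots of unity (the unit circle), and raising to the $s$-th power on $U$ has image the subgroup of index $t = \gcd(s, 2^m+1)$, the image of $f$ on $\mathbb{F}_{2^n}^*$ is $\langle \alpha^s \rangle \cup \{0\}$ where $\alpha = \psi^{2^m-1}$ and $\alpha$ has order $2^m+1$; moreover each nonzero value is hit exactly $t(2^m-1)$ times and $0$ is hit once (at $x=0$) — actually $t(2^m-1)$ preimages of $1$ from $\mathbb{F}_{2^n}^*$ but $x=0$ is a separate preimage of $0$. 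This immediately explains why $b$ must be of the form $\alpha^{si} + \alpha^{sj}$ (writing $(x+1)^d = \alpha^{si}$, $x^d = \alpha^{sj}$) or fail to be hit at all, giving part (v) once the other parts are established; and part (i) is the easy case $b=0$, which forces $(x+1)^d = x^d$, i.e.\ $(x+1)/x \in \ker$ of the map $y \mapsto y^d$ on $\mathbb{F}_{2^n}^*$, an index computation yielding $t(2^m-1)$ values of $x+1 \ne 0$, minus the solution $x$ with $x+1 = 0$... one checks the count is $t(2^m-1)-1$.

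\medskip

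\noindent\textbf{Core reduction.} For a target $b = \alpha^{s i_0} + \alpha^{s j_0}$ (with $i_0, j_0$ in a fixed set of coset representatives), I would count $x \in \mathbb{F}_{2^n}^{\sharp}$ such that $x^d = \alpha^{s j}$ and $(x+1)^d = \alpha^{s i}$ for \emph{some} pair $(i,j)$ with $\alpha^{si} + \alpha^{sj} = b$. By Lemma~\ref{doubesame} the \emph{unordered} pair $\{\alpha^{si}, \alpha^{sj}\}$ equal to $\{\alpha^{si_0}, \alpha^{sj_0}\}$ is unique, so when $i_0 \ne j_0$ there are two ordered choices and when $i_0 = j_0$ just one. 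Fixing one ordered choice, the condition $x^d = \alpha^{sj}$, $(x+1)^d = \alpha^{si}$ says exactly that $x$ lies in a union of the cells $C_{j_1, j_2}$ from (\ref{c12}): precisely those with $j_2 \equiv ind_\psi(\alpha^{sj}) $ and $j_1 \equiv ind_\psi(\alpha^{si})$ modulo the appropriate modulus determined by $\gcd(s, 2^m+1) = t$. The key linear-algebra / divisibility bookkeeping is to express $\#C_{j_1,j_2}$ via $N_{2^n}(\chi)$ for the curve $\alpha' x^{n_1} + \beta' y^{n_2} + 1 = 0$ with $n_1 = n_2 = 2^m+1$ (so $k=1$, $t$ plays the role of $\gcd(n_1,n_2)$ only after a change of variables reflecting the index-$t$ structure) — more precisely, writing $x = \psi^{u}$, the conditions translate into $\psi^{r_1 + (2^m+1)w_1}$-type membership, and $\#C_{j_1,j_2} = \frac{1}{(2^m+1)^2}\big( N_{2^n}(\chi) - (\text{boundary corrections for } x \in \{0,-1\})\big)$. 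Plugging in the five cases of Lemma~\ref{luo wang} gives $\#C_{j_1,j_2}$ as $\frac{p^n - \varepsilon p^{m}}{(2^m+1)^2}$-ish for the appropriate sign/coefficient $\varepsilon \in \{-1, 2, t-2, \dots\}$ depending on whether $r_1, r_2$ vanish and whether $t \mid r_1 - r_2$.

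\medskip

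\noindent\textbf{Assembling the five parts.} Part (ii): $b = 1$ corresponds to $\{(x+1)^d, x^d\}$ summing to $1$; here the relevant residues satisfy $r_1 - r_2 \equiv 0$, and whether $3t \mid 2^m+1$ controls whether the curve $x^{n_1} + y^{n_2} + 1 = 0$ has extra rational points coming from third roots of unity — this is the case $r_1 = r_2 = 0$ of Lemma~\ref{luo wang}(i) versus a degenerate sub-case, explaining the dichotomy $\delta(1) = 2$ vs.\ $\delta(1) = 2t^2 + 2$. Part (iii): $b = 1 + \alpha^{si}$ with $i \ne 0$ corresponds to one of $i_0, j_0$ being $0$ and the other nonzero, i.e.\ case (ii) or (iii) of Lemma~\ref{luo wang}, with the $p^{m/2}$ wait — here $n = 2m$ so $p^{n/2} = 2^m$, giving the coefficient $(n_1 - 2) = 2^m + 1 - 2 = 2^m - 1$; after dividing by $(2^m+1)^2$ and multiplying by the $2$ ordered choices and summing the contributing cells, one gets $2t(t-1)$. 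Part (iv): both $i_0, j_0$ nonzero and distinct with $t \nmid i_0 - j_0$ lands in case (iv) and with $t \mid i_0 - j_0$ in case (v); careful accounting of which cells contribute (and the factor $2$ from ordering) yields $2t^2$ uniformly. Throughout, the two basic identities (\ref{two basic ide}) serve as a consistency check on the multiplicities.

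\medskip

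\noindent\textbf{Main obstacle.} The delicate point is the exact translation between the "$\bmod\ (2^m+1)$" cell description $C_{j_1,j_2}$ and the "$\bmod\ t$" structure forced by the exponent $s$, together with the boundary corrections at $x \in \{0, -1\}$ and the double-counting governed by Lemma~\ref{doubesame}. One must verify that summing $\#C_{j_1,j_2}$ over exactly the right family of cells — and over the (one or two) ordered pairs realizing a given $b$ — collapses the $p^n$ and $p^m$ main terms correctly so that only the stated small constants ($2$, $2t^2+2$, $2t(t-1)$, $2t^2$, $0$) survive; the $\frac{2^m+1}{t} > 3$ hypothesis is precisely what rules out the extra coincidences (and ensures $3t \mid 2^m+1$ is the only exceptional arithmetic condition) so that no further degenerate cases of Lemma~\ref{luo wang} intrude. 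Getting this index bookkeeping exactly right, rather than any single estimate, is where the real work lies.
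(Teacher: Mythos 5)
Your proposal follows essentially the same route as the paper's proof: partition $\mathbb{F}_{2^n}^{\sharp}$ into the cells $C_{j_1,j_2}$ of (\ref{c12}), on which $\triangle(x)=(x+1)^d+x^d$ is constant equal to $\alpha^{sj_1}+\alpha^{sj_2}$, count each contributing cell by translating membership into rational points on a curve $\psi^{r_1}u^{2^m+1}+\psi^{r_2}v^{2^m+1}+1=0$ and invoking Lemma~\ref{luo wang}, use Lemma~\ref{doubesame} to see that each $b$ arises from a unique unordered pair of exponents, and treat $x\in\{0,1\}$ separately. The only detail worth correcting is in part (ii): for $b=1$ with $3t\mid 2^m+1$ the contributing cells are of the form $C_{2l+\cdot,\;l+\cdot}$ with $l\neq 0$, so the count per cell comes from case (iv) of Lemma~\ref{luo wang} (with $n_1=n_2=2^m+1$ and $r_1=2l\neq l=r_2$), not from the case $r_1=r_2=0$; the dichotomy is governed by whether the primitive cube roots of unity (whose sum is $1$ in characteristic $2$) lie in $\langle\alpha^{t}\rangle$, which happens exactly when $3t\mid 2^m+1$.
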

\begin{proof} 
Let $\triangle(x)=(x+1)^{d}-x^{d}$. For any $b\in\mathbb{F}_{p^n}$, if $x_1\in C_{i,j}$ is a solution of $\triangle(x)=(x+1)^{d}-x^{d}=b$, then 
\begin{equation*}
  \triangle(x_1)=\alpha^{si}-\alpha^{sj}=b.
\end{equation*}
So, if $x\in\mathbb{F}_{p^n}^{\sharp}$, from  $\bigsqcup\limits_{0\leq i,j\leq p^m} C_{i,j}=\mathbb{F}_{p^n}^{\sharp}$ , we can distinguish $(p^m+1)^2$ disjoint cases.

(i) Note that $\triangle(x)=0$ has solutions only in $\bigcup\limits_{\substack {0\leq i< \frac{2^m+1}{t}, \\0\leq j_1,j_2<t}}C_{i+\frac{(2^m+1)j_1}{t},i+\frac{(2^m+1)j_2}{t}}$.
 For $x\in C_{0,0}$, the following holds:  
\begin{equation*}
ind_{\psi}(x) \equiv ind_{\psi}(x + 1)\equiv 0\,\,({\rm mod}\,\,2^m+1),
\end{equation*}
and
\begin{equation*}
\triangle(x)=0.
\end{equation*}
Furthermore, the number of solutions to the system
\begin{equation*}
\begin{cases}  
x_1 - x_2 + 1 = 0, \\  
ind_{\psi}(x_1) =ind_{\psi}(x_2) \equiv 0\,\,({\rm mod}\,\,2^m+1),  
\end{cases}  
\end{equation*}
is equal to the number of elements in $C_{0,0}$. According to Lemma \ref{luo wang} (i), this number is exactly 
\begin{equation*}
  \frac{2^n+(2^{2m}-2^m)2^{m}-3\cdot2^m-2}{(2^m+1)^2}=2^m-2.
\end{equation*} 
For other cases, we have the following results, see Table \ref{jktable}.
\begin{table}[H]
\centering
 \caption{Number of solutions to $\triangle(x)=0$ in $C_{l_1,l_2}$}
\label{jktable}
\begin{tabular}{l l}
\hline
$(l_1,l_2)$&
Number of solutions to $\triangle(x)=0$ (each) \\
[0.5ex]
\hline
 $(0,0)$ & $2^m-2$\\
 \hline

$(i,i)$, $0<i<(2^m+1)/t$ & $0$ \\
\hline

$\big(0,(2^m+1)j_1/t\big)$, 
$\big((2^m+1)j_1/t,0\big)$, $0< j_1<t$ & $0$ \\
\hline

\tabincell{l}{$\big((2^m+1)j_1/t,(2^m+1)j_2/t\big)$ \\
for $0< j_1,j_2<t$, $j_1\neq j_2$}   & $1$   \\
\hline

\tabincell{l}{$\big(i+(2^m+1)j_1/t,i+(2^m+1)j_2/t\big)$\\
for $0< i< (2^m+1)/t$, $0\leq j_1,j_2<t$, $j_1\neq j_2$} & $1$  \\
\hline

Other Cases &  $0$ \\
\hline
\end{tabular}
\end{table} 
In total, 
\begin{equation*}
 \delta(0)=t(2^m-1)-1.
\end{equation*}

(ii) When $3t\nmid2^m+1$, $\triangle(x)=1$ has no solution in $\mathbb{F}_{2^n}^{\sharp}$. We also obtain $\triangle(0)=1$ and $\triangle(1)=1$. So, $\delta(1)=2$.

When $3t\mid2^m+1$, $\alpha^{\frac{2^m+1}{3}}+\alpha^{\frac{2(2^m+1)}{3}}=1$. Equation $\triangle(x)=1$ has solutions only in 
\begin{equation*}
  \bigcup\limits_{0\leq j_1,j_2<t}\bigg(C_{2l+\frac{(2^m+1)j_1}{t},
  l+\frac{(2^m+1)j_2}{t}}
  \bigcup C_{l+\frac{(2^m+1)j_1}{t},
  2l+\frac{(2^m+1)j_2}{t}}\bigg)\bigcup\{0,1\},
\end{equation*}  
where $ls(2^m-1)\equiv \frac{2^n-1}{3}\,\,({\rm mod}\,\,2^n-1)$, $1\leq l<\frac{2^m+1}{t}$.
If $x\in C_{2l,l}$, then $\triangle(x)=1$ is equivalent to the system of equations
\begin{eqnarray}\label{001alpha12}
    \left \{
\begin{array}{lll}
\psi^{2l}x_1-\psi^{l} x_2=1,\\
ind_{\psi}(x_1)=ind_{\psi}(x_2)\equiv 0\,\,({\rm mod}\,\,2^m+1).\\
\end{array}\right.
\end{eqnarray}
By Lemma \ref{luo wang} (iv), we have 
\begin{equation*}
  \frac{2^n+2^{m+1}+1}{(2^m+1)^2}=1
\end{equation*}
solution to (\ref{001alpha12}). Using the same method, we can ultimately obtain $2t^2+2$ solutions for the equation $\triangle(x)=1$ in 
\begin{equation*}
  \bigcup\limits_{0\leq j_1,j_2<t}\bigg(C_{2l+\frac{(2^m+1)j_1}{t},
  l+\frac{(2^m+1)j_2}{t}}
  \bigcup C_{l+\frac{(2^m+1)j_1}{t},
  2l+\frac{(2^m+1)j_2}{t}}\bigg)\bigcup\{0,1\}.
\end{equation*}

(iii)-(iv) Note that $\triangle(x)=1+\alpha^s$ has solutions only in 
\begin{equation*}
  \bigcup\limits_{0\leq j_1,j_2<t}\bigg(C_{\frac{(2^m+1)j_1}{t},1+\frac{(2^m+1)j_2}{t}}
\bigcup C_{1+\frac{(2^m+1)j_1}{t},\frac{(2^m+1)j_2}{t}}\bigg),
\end{equation*}
and $\triangle(x)=\alpha^s+\alpha^{2s}$ has solutions only in 
\begin{equation*}
  \bigcup\limits_{0\leq j_1,j_2<t}\bigg(C_{2+\frac{(2^m+1)j_1}{t},1+\frac{(2^m+1)j_2}{t}}
\bigcup C_{1+\frac{(2^m+1)j_1}{t},2+\frac{(2^m+1)j_2}{t}}\bigg).
\end{equation*}
By Lemma \ref{luo wang}, we obtain
\begin{equation*}
 \delta(1+\alpha^s)=2t(t-1),
\end{equation*} 
and
\begin{equation*}
 \delta(\alpha^s+\alpha^{2s})
 =2t^2.
\end{equation*} 

(v) The obvious conclusion follows from (i)-(iv) and Lemma \ref{doubesame}.
\end{proof}

We determine the differential spectrum of $f$ in the following result.

\begin{theorem}
Let $f(x)=x^d$ be a power mapping defined over $\mathbb{F}_{2^n}$, where $d=s(2^m-1)$, $\gcd(s,2^m+1)=t$, $n=2m$ and $\frac{2^m+1}{t}>3$. For any $b\in\mathbb{F}_{2^n}$, the differential spectrum of $f$ is shown as following.

(i) For case $3t\nmid2^m+1$, Table \ref{2222} holds.

(ii) For case $3t\mid2^m+1$, Table \ref{ds22table} holds.
\begin{table}[H]
\centering
 \caption{$3t\nmid2^m+1$}
 \label{2222}
\begin{tabular}{l l}
\hline
$\delta(1,b)$&  Frequency \\
[0.5ex]
\hline
0  &  $\frac{t^22^{n+1}-(2^m-t+2)2^m-4t^2+t-1}{2t^2}$  \\

$2t^2$  & $\frac{2^{2m}+(2-3t)2^m+2t^2-3t+1}{2t^2}$ \\

$2t(t-1)$  &  $\frac{2^m-t+1}{t}$\\

2  &1  \\

$t(2^m-1)-1$  &1 \\
\hline
\end{tabular}
\end{table}

\begin{table}[H]
\centering
 \caption{$3t\mid2^m+1$}
\label{ds22table}
\begin{tabular}{l l}
\hline
$\delta(1,b)$&  Frequency \\
[0.5ex]
\hline
0  &  $\frac{t^22^{n+1}-(2^m-t+2)2^m-2t^2+t-1}{2t^2}$\\

$2t^2$   &  $\frac{2^{2m}+(2-3t)2^m-3t+1}{2t^2}$  \\

$2t(t-1)$ &  $\frac{2^m-t+1}{t}$\\

$2t^2+2$  & 1  \\

$t(2^m-1)-1$  & 1\\
\hline
\end{tabular}
\end{table}
\end{theorem}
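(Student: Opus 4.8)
The plan is to extract everything from Lemma \ref{lemma ds s f2}, which already records, for every $b\in\mathbb{F}_{2^n}$, the exact value of $\delta(1,b)=\delta(b)$; proving the theorem then amounts to counting how many $b$ realize each value. Write $M=\frac{2^m+1}{t}$ (so $M>3$ by hypothesis) and $\alpha=\psi^{2^m-1}$. By parts (i)--(v) of that lemma, $\delta(b)>0$ precisely when $b$ belongs to one of the families
\begin{equation*}
\{0\},\qquad \{1\},\qquad \{\,1+\alpha^{si}:1\le i\le M-1\,\},\qquad \{\,\alpha^{si}+\alpha^{sj}:1\le i<j\le M-1,\ \alpha^{si}+\alpha^{sj}\ne 1\,\},
\end{equation*}
on which $\delta$ takes the values $t(2^m-1)-1$, $\ \delta(1)$, $\ 2t(t-1)$, $\ 2t^2$ respectively, where $\delta(1)=2$ if $3t\nmid 2^m+1$ and $\delta(1)=2t^2+2$ if $3t\mid 2^m+1$.

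First I would note that $\alpha$ has order $2^m+1$, hence $\alpha^s$ has order $M$ and $\alpha^{s\cdot 0},\dots,\alpha^{s(M-1)}$ are the $M$ distinct elements of the order-$M$ subgroup of $\mathbb{F}_{2^n}^*$. This gives at once $|\{1+\alpha^{si}:1\le i\le M-1\}|=M-1=\frac{2^m-t+1}{t}$, the claimed frequency of $2t(t-1)$. Next I would invoke Lemma \ref{doubesame}: a sum of two powers of $\alpha$ with exponents in $\{0,\dots,2^m\}$ determines the unordered pair of exponents uniquely. Consequently the $\binom{M-1}{2}$ pairs $\{i,j\}$ with $1\le i<j\le M-1$ yield $\binom{M-1}{2}$ pairwise distinct values $\alpha^{si}+\alpha^{sj}$, and none of them can equal $0$, or $1+\alpha^{sk}$ for any $k\ge 1$, or coincide with $1$ in more than one way, because the relevant exponent multisets are forced to differ (the ones arising from the fourth family never contain the exponent $0$). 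So the four families are pairwise disjoint, and the only remaining task is to decide how many of the $\binom{M-1}{2}$ pairs in the fourth family must be discarded because $\alpha^{si}+\alpha^{sj}=1$.

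I expect this last point to be the main obstacle, and I would handle it as follows. If $a+b=1$ with $a,b$ in the subgroup of order $2^m+1$, then $a^{2^m}=a^{-1}$ and $b^{2^m}=b^{-1}$; raising $a+b=1$ to the $2^m$-th power and using $a=1+b$ forces $b+b^{2^m}=1$ and $b\cdot b^{2^m}=1$, so $b$ is a root of $X^2+X+1$, i.e.\ a primitive cube root of unity, and $a=b^2$. Hence $\alpha^{si}+\alpha^{sj}=1$ is solvable iff the order-$M$ subgroup contains a cube root of unity, i.e.\ iff $3\mid M$, equivalently $3t\mid 2^m+1$; and in that case it happens for exactly one pair $\{i,j\}$ (the two distinct cube roots of unity, neither equal to $1$). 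Therefore the frequency of $2t^2$ equals $\binom{M-1}{2}$ when $3t\nmid 2^m+1$ and $\binom{M-1}{2}-1$ when $3t\mid 2^m+1$, and expanding
\begin{equation*}
\binom{M-1}{2}=\frac{(2^m+1-t)(2^m+1-2t)}{2t^2}=\frac{2^{2m}+(2-3t)2^m+2t^2-3t+1}{2t^2}
\end{equation*}
recovers the corresponding rows of Tables \ref{2222} and \ref{ds22table}.

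Finally, I would obtain the last entry from the first identity in (\ref{two basic ide}): $\omega_0=2^n-1-1-(M-1)-\omega_{2t^2}$; substituting the counts above and simplifying over the common denominator $2t^2$ yields the ``$\delta(1,b)=0$'' row of each table, the two tables differing there by exactly the $\pm 1$ coming from the cube-root analysis. As a consistency check I would verify that the resulting spectrum satisfies the second identity $\sum_i i\,\omega_i=2^n$. Apart from the cube-root-of-unity dichotomy, everything is routine bookkeeping with Lemma \ref{doubesame} and elementary arithmetic in terms of $M=\frac{2^m+1}{t}$.
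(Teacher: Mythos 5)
Your proposal is correct and follows essentially the same route as the paper: read off the nonzero values of $\delta(1,b)$ from Lemma \ref{lemma ds s f2}, count how many $b$ realize each value, and recover $\omega_0$ from $\sum_i\omega_i=2^n$. The only difference is bookkeeping: you obtain $\omega_{2t^2}=\binom{M-1}{2}$ (minus the cube-root-of-unity correction when $3t\mid 2^m+1$) by direct counting via Lemma \ref{doubesame}, whereas the paper can equivalently extract it from the second identity in (\ref{two basic ide}); your explicit verification of the distinctness and disjointness of the families $\{1+\alpha^{si}\}$ and $\{\alpha^{si}+\alpha^{sj}\}$ and of the $3t\mid 2^m+1$ dichotomy supplies details the paper leaves implicit.
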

\begin{proof}
When $3t\nmid2^m+1$, from (\ref{two basic ide}) and Lemma \ref{lemma ds s f2}, we have 
\begin{eqnarray*}
  &&\omega_{2}=1,\mbox{ } \omega_{t(2^m-1)-1}=1, \\
  &&\omega_{2t(t-1)}=\frac{2^m-t+1}{t},\\ 
  &&\omega_{2t^2}
  =\frac{2^{2m}+(2-3t)2^m+2t^2-3t+1}{2t^2},\\ 
  &&\omega_0+\omega_{2}
  +\omega_{t(2^m-1)-1}+\omega_{2t(t-1)}+\omega_{2t^2}=2^n. 
\end{eqnarray*}
The conclusion can be obtained by solving the above equations.

When $3t\mid2^m+1$, from (\ref{two basic ide}) and Lemma \ref{lemma ds s f2}, we have
\begin{eqnarray*}
  &&\omega_{2t^2+2}=1,  \\
  &&\omega_{t(2^m-1)-1}=1,  \\
  &&\omega_{2t(t-1)}=\frac{2^m-t+1}{t},\\
 &&\omega_{2t^2}
  =\frac{2^{2m}+(2-3t)2^m-3t+1}{2t^2},  \\
  &&\omega_0+\omega_{2t^2+2}
  +\omega_{2t(t-1)}+\omega_{t(2^m-1)-1}
  +\omega_{2t^2}=2^n.
\end{eqnarray*}
Solve the above equations and we obtain the result.
\end{proof}

\subsection{$p=3$}

In this subsection, we determine the differential spectrum of $f(x)=x^d$ over $\mathbb{F}_{3^n}$. Following a similar approach to the proof of Lemma \ref{lemma ds s f2}, we establish analogous results here while omitting the detailed proof.
\begin{lemma}\label{lemma ds s f3} 
Let $\alpha=\psi^{3^m-1}$ and $\frac{3^m+1}{t}>3$. Using the aforementioned notations, we obtain the following result:
\begin{itemize}
\item[(i)]  $\delta(0)=t(3^m-1)-1$.
\item[(ii)]  $\delta(1)=\delta(2)=1$
     for $2t\nmid3^m+1$. $\delta(1)=\delta(2)
     =t^2-t+1$ for $2t\mid3^m+1$.
\item[(iii)] if $2t\nmid3^m+1$, then $\delta(1-\alpha^{si})
    =\delta(\alpha^{si}-1)
    =t^2-t$, where $1\leq i< \frac{3^m+1}{t}$.
\item[(iv)] if $2t\mid3^m+1$, then $\delta(1-\alpha^{si})
    =\delta(\alpha^{si}-1)
    =2t^2-t$, where $1\leq i< \frac{3^m+1}{t}$ and $\alpha^{si}\neq2$.
\item[(v)] $\delta(2\alpha^{si})
    =0$ for $2t\nmid3^m+1$ and $1\leq i< \frac{3^m+1}{t}$. $\delta(2\alpha^{si})=t^2$ for $2t\mid3^m+1$, $1\leq i< \frac{3^m+1}{t}$ and $\alpha^{si}\neq2$.
\item[(vi)] if $2t\nmid3^m+1$, then $\delta(\alpha^{si}-\alpha^{sj})
    =t^2$, where $1\leq i,j<\frac{3^m+1}{t}$ and $i\neq j$.
\item[(vii)] if $2t\mid3^m+1$, then $\delta(\alpha^{si}-\alpha^{sj})
    =2t^2$, where $1\leq i,j< \frac{3^m+1}{t}$, $\alpha^{si},\alpha^{sj}\neq2$, $i\neq j$ and $\alpha^{si}-\alpha^{sj}\neq1,2$.
\item[(viii)] $\delta(b)=0$ for $b\neq\pm1, \alpha^{si}-\alpha^{sj}$, where $0\leq i,j< \frac{3^m+1}{t}$. \hfill$\square$
\end{itemize}
\end{lemma}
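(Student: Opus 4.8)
The plan is to run, \emph{mutatis mutandis}, the argument that proves Lemma~\ref{lemma ds s f2}; the only genuinely new features in characteristic $3$ are the second exceptional point $x=-1$ and some sign bookkeeping. Set $\triangle(x)=(x+1)^d-x^d$, so $\delta(b)=\#\{x\in\mathbb{F}_{3^n}:\triangle(x)=b\}$. Since $\alpha=\psi^{3^m-1}$ has order $3^m+1$ and $x^d=\alpha^{s\cdot ind_\psi(x)}$ for $x\neq0$, the function $\triangle$ is \emph{constant} on every cell $C_{j_1,j_2}$, equal to $\alpha^{sj_1}-\alpha^{sj_2}$; hence the number of solutions of $\triangle(x)=b$ inside $C_{j_1,j_2}$ is $\#C_{j_1,j_2}$ when $\alpha^{sj_1}-\alpha^{sj_2}=b$ and $0$ otherwise. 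The two points of $\mathbb{F}_{3^n}\setminus\mathbb{F}_{3^n}^{\sharp}$ are handled by hand: $\triangle(0)=1$, and, because $3^m-1$ is even so $(-1)^d=1$, $\triangle(-1)=-1$. Thus $\delta(b)$ equals $\sum\#C_{j_1,j_2}$ over all $(j_1,j_2)$ with $\alpha^{sj_1}-\alpha^{sj_2}=b$, plus $1$ if $b\in\{1,-1\}$.

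Next I would exploit symmetry: the involution $x\mapsto -1-x$ satisfies $\triangle(-1-x)=-\triangle(x)$ (again using $(-1)^d=1$), so $\delta(b)=\delta(-b)$ for all $b$; this explains all the paired statements ($\delta(1)=\delta(2)$, $\delta(1-\alpha^{si})=\delta(\alpha^{si}-1)$, and stability of the families $\{2\alpha^{si}\}$, $\{\alpha^{si}-\alpha^{sj}\}$ under $b\mapsto-b$) and halves the casework. To decide which cells realise a given $b$, note that $\gcd(s,3^m+1)=t$ makes $j\mapsto\alpha^{sj}$ a $t$-to-one map of $\{0,\dots,3^m\}$ onto $\mu:=\langle\alpha^s\rangle$, a group of order $(3^m+1)/t$ (precisely the set of nonzero values of $f$). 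In characteristic $3$ one has $-1=\alpha^{(3^m+1)/2}$, and $-1\in\mu$ iff $2t\mid 3^m+1$: this is the source of the dichotomy in (ii)--(vii). Rewriting a difference $\alpha^{sj_1}-\alpha^{sj_2}$ as the sum $\alpha^{sj_1}+\alpha^{sj_2+(3^m+1)/2}$ and invoking Lemma~\ref{doubesame} (with $k=1$), one shows that a nonzero $b$ lying in $\mu-\mu$ has exactly one such representation when $2t\nmid 3^m+1$, and two (generically distinct, coinciding exactly when $b=2\alpha^{sj}$) when $2t\mid 3^m+1$; each unordered representation corresponds to $t^2$ ordered indices $(j_1,j_2)$. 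The hypothesis $(3^m+1)/t>3$ guarantees that $1$, $-1$, and the various $\alpha^{si}-1$, $2\alpha^{si}$, $\alpha^{si}-\alpha^{sj}$ are pairwise distinct whenever the Lemma claims so, ruling out accidental small-subgroup collisions.

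The cell sizes come from curves. For fixed $(j_1,j_2)$, $x\in C_{j_1,j_2}$ iff $x+1=\psi^{j_1}u^{3^m+1}$ and $x=\psi^{j_2}v^{3^m+1}$ for some $u,v\in\mathbb{F}_{3^n}^*$; eliminating $x$ gives the affine curve $-\psi^{j_1}u^{3^m+1}+\psi^{j_2}v^{3^m+1}+1=0$, to which Lemma~\ref{luo wang} applies with $n_1=n_2=3^m+1$, $k=1$ and residues $r_1\equiv j_1$, $r_2\equiv j_2 \pmod{3^m+1}$ (here $ind_\psi(-1)\equiv 0$, so the $-1$ in front of $\psi^{j_1}$ does not change the residue). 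Dividing the point count by $(3^m+1)^2$ and subtracting the $u=0$ and $v=0$ contributions yields $\#C_{0,0}=3^m-2$, $\#C_{j_1,j_2}=1$ when $j_1,j_2,j_1-j_2$ are all nonzero modulo $3^m+1$, and $\#C_{j_1,j_2}=0$ in the remaining boundary cases $j_1=0$, $j_2=0$, or $j_1=j_2\neq0$. Feeding these into the enumeration above — in particular, the \emph{empty} cells with $j_1=0$ or $j_2=0$ are exactly what turns $t^2$ into $t^2-t$ and $2t^2$ into $2t^2-t$ — together with the contributions of $x=0,-1$, gives (i)--(vii); and (viii) is immediate since $\triangle$ takes no value outside $\{\pm1\}\cup\{\alpha^{si}-\alpha^{sj}\}$.

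The real work is the bookkeeping in the last two steps: for each shape of $b$ ($b=\pm1$; $b=\pm(1-\alpha^{si})$; $b=2\alpha^{si}$; $b=\alpha^{si}-\alpha^{sj}$ generic) one must list precisely which residue pairs $(j_1,j_2)$ occur, decide which of the five cases of Lemma~\ref{luo wang} each falls under, discard the empty boundary cells, keep the branches $2t\mid 3^m+1$ and $2t\nmid 3^m+1$ rigorously separate, and verify via $(3^m+1)/t>3$ that none of the listed $b$'s coincide. Once this enumeration is organised correctly, the curve-counting is purely mechanical, and the differential spectrum then drops out of the identities $\sum\omega_i=\sum i\,\omega_i=3^n$ exactly as in the $p=2$ case.
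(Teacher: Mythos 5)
Your proposal is correct and follows essentially the same route as the paper, which itself omits this proof and refers to the argument for Lemma~\ref{lemma ds s f2}: decompose $\mathbb{F}_{3^n}^{\sharp}$ into the cosets $C_{j_1,j_2}$, determine the cell sizes via Lemma~\ref{luo wang}, use Lemma~\ref{doubesame} to count the representations $b=\alpha^{sj_1}-\alpha^{sj_2}$ (with the dichotomy governed by whether $-1=\alpha^{(3^m+1)/2}$ lies in $\langle\alpha^s\rangle$, i.e.\ $2t\mid 3^m+1$), and treat $x=0,-1$ by hand. The only addition is the involution $x\mapsto -1-x$ giving $\delta(b)=\delta(-b)$, which is a harmless (and helpful) shortcut consistent with the paper's statements.
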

By analyzing the results of Lemma \ref{lemma ds s f3}, we obtain the following results.
\begin{theorem}
Let $f(x)=x^d$ be a power mapping defined over $\mathbb{F}_{3^n}$, where $d=s(3^m-1)$, $\gcd(s,3^m+1)=t$, $n=2m$ and $\frac{3^m+1}{t}>3$. For any $b\in\mathbb{F}_{3^n}$, the differential spectrum of $f$  is shown as following.

(i) For case $2t\nmid3^m+1$, Table \ref{ds31table} holds.

(ii) For case $2t\mid3^m+1$, Table \ref{ds32table} holds.
\begin{table}[H]
\centering
 \caption{$2t\nmid 3^m+1$}
\label{ds31table}
\begin{tabular}{l l}
\hline
$\delta(1,b)$&  Frequency \\
[0.5ex]
\hline
0  & $\frac{t^23^n-(3^m+2-t)3^m-3t^2+t-1}{t^2}$ \\

$t^2$ &
$\frac{3^{2m}+(2-3t)3^m+2t^2-3t+1}{t^2}$ \\

$t^2-t$  &  $\frac{2(3^m-t+1)}{t}$  \\

1 & 2\\

$t(3^m-1)-1$  & 1\\
\hline
\end{tabular}
\end{table}
\begin{table}[H]
\centering
 \caption{$2t\mid 3^m+1$}
\label{ds32table}
\begin{tabular}{l l}
\hline
$\delta(1,b)$&  Frequency \\
[0.5ex]
\hline
0  &  $\frac{2t^23^n-(3^m+2)3^{m}-2t^2-1}{2t^2}$ \\

$2t^2$ &  $\frac{3^{2m}+(2-6t)3^m+8t^2-6t+1}{2t^2}$ \\

$2t^2-t$  &
$\frac{2(3^m-2t+1)}{t}$  \\

$t^2$ & 
$\frac{3^m-2t+1}{t}$ \\

$t^2-t+1$ & 2\\

$t(3^m-1)-1$ & 1 \\
\hline
\end{tabular}
\end{table}
\hfill$\square$
\end{theorem}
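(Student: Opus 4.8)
The plan is to mirror the $p=2$ argument. Lemma~\ref{lemma ds s f3} already determines $\delta(1,b)$ for every $b\in\mathbb{F}_{3^n}$ in terms of the algebraic shape of $b$, so what remains is to convert this pointwise data into the frequencies $\omega_i$ and then pin down the two frequencies the lemma does not supply directly. I would begin by recording the structural facts behind Lemma~\ref{lemma ds s f3}: since $\psi$ has order $3^{2m}-1=(3^m-1)(3^m+1)$, the element $\alpha=\psi^{3^m-1}$ has order $3^m+1$, so $\alpha^s$ generates the subgroup $U=\langle\alpha^s\rangle$ of order $K:=\frac{3^m+1}{t}$ and $i\mapsto\alpha^{si}$ is a bijection $\{0,\dots,K-1\}\to U$; moreover $2=-1$ is the unique involution of $\mathbb{F}_{3^n}^*$, hence $-1\in U$ precisely when $2\mid K$, i.e.\ precisely when $2t\mid 3^m+1$. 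This is exactly the dichotomy in the statement: it controls whether the exceptional index with $\alpha^{si}=2$ occurs and whether $\{2\alpha^{si}\}$ is a coset of $U$, which is why the two tables differ.

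Next I would count, for each family appearing in Lemma~\ref{lemma ds s f3}, the number of distinct $b$ it contributes, using Lemma~\ref{doubesame} to guarantee distinctness. The families $\{0\}$ and $\{\pm 1\}$ give $\omega_{t(3^m-1)-1}=1$ and $\omega_1=2$ (resp.\ $\omega_{t^2-t+1}=2$); the family $\pm(1-\alpha^{si})$ gives $2(K-1)=\frac{2(3^m-t+1)}{t}$ when $2t\nmid 3^m+1$ and $\frac{2(3^m-2t+1)}{t}$ when $2t\mid 3^m+1$ (discarding the index with $\alpha^{si}=2$); and when $2t\mid 3^m+1$ the family $2\alpha^{si}$ gives $K-2=\frac{3^m-2t+1}{t}$. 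The hypothesis $\frac{3^m+1}{t}>3$ is used here to exclude degenerate coincidences among these small exceptional sets. After this step the only undetermined frequencies are $\omega_0$ and $\omega_{t^2}$ (case $2t\nmid 3^m+1$), resp.\ $\omega_0$ and $\omega_{2t^2}$ (case $2t\mid 3^m+1$) --- the latter counting the difference family $\alpha^{si}-\alpha^{sj}$, whose direct enumeration is awkward. I would then substitute all known frequencies into the two basic identities $(\ref{two basic ide})$, $\sum_i\omega_i=3^n$ and $\sum_i i\,\omega_i=3^n$; the resulting $2\times 2$ linear system in the two unknowns has coefficient matrix of determinant $t^2\neq 0$ (resp.\ $2t^2\neq 0$), so its unique solution gives precisely the entries of Tables~\ref{ds31table} and \ref{ds32table}.

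The main obstacle is the bookkeeping in the counting step: checking that the families $\pm(1-\alpha^{si})$, $2\alpha^{si}$ and $\alpha^{si}-\alpha^{sj}$ are pairwise disjoint as sets of $b$-values, and correctly removing the finitely many exceptional $b$ ($b=\pm 1$, $b=2\alpha^{si}$ with $\alpha^{si}=2$, $b=\alpha^{si}-\alpha^{sj}\in\{1,2\}$) that Lemma~\ref{lemma ds s f3} separates out, with everything branching on whether $2t\mid 3^m+1$. Once those multiplicities are fixed, recovering $\omega_0$ and $\omega_{t^2}$ (resp.\ $\omega_{2t^2}$) from the two identities is immediate and purely computational.
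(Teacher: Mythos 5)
Your proposal is correct and follows essentially the same route as the paper: the paper derives these tables by combining the pointwise values of Lemma~\ref{lemma ds s f3} (whose distinctness of $b$-values rests on Lemma~\ref{doubesame}) with the two basic identities (\ref{two basic ide}) to solve for the remaining frequencies, exactly as in the written-out $p=2$ case that your argument mirrors. Your counts $2(K-1)=\frac{2(3^m-t+1)}{t}$, $2(K-2)=\frac{2(3^m-2t+1)}{t}$ and $K-2=\frac{3^m-2t+1}{t}$ with $K=\frac{3^m+1}{t}$, and the final $2\times 2$ linear solve, reproduce the tabulated entries.
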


\subsection{$p>3$}
In this subsection, we compute the differential spectrum of $x^d$ over $\mathbb{F}_{p^n}$ (for $p>3$) by employing methodology analogous to Lemma \ref{lemma ds s f2}, omitting repetitive proof details.
\begin{lemma}\label{lemma ds s f4}
Let $\alpha=\psi^{p^m-1}$ and $\frac{p^m+1}{t}>3$. Within the established notational framework, we derive:
\begin{itemize}
\item[(i)] $\delta(0)=t(p^m-1)-1$. 
\item[(ii)] $\delta(1)=\delta(p-1)=1$ for $6t\nmid p^m+1$. $\delta(1)=\delta(p-1)
     =2t^2+1$ for $6t\mid p^m+1$.
\item[(iii)]  $\delta(2)=\delta(p-2)=0$
     for $2t\nmid p^m+1$. $\delta(2)=\delta(p-2)
     =t^2-t$ for $2t\mid p^m+1$.
\item[(iv)] if $2t\nmid p^m+1$, then $\delta(1-\alpha^{si})
    =\delta(\alpha^{si}-1)
    =t^2-t$, where $1\leq i< \frac{p^m+1}{t}$.
\item[(v)] if $2t\mid p^m+1$, then $\delta(1-\alpha^{si})
    =\delta(\alpha^{si}-1)
    =2t^2-t$, where $1\leq i< \frac{p^m+1}{t}$ and $\alpha^{si}\neq-1$.
\item[(vi)] $\delta(2\alpha^{si})
    =0$ for $2t\nmid p^m+1$ and $1\leq i< \frac{p^m+1}{t}$.
$\delta(2\alpha^{si})=t^2$ for $2t\mid p^m+1$, $1\leq i< \frac{p^m+1}{t}$ and $\alpha^{si}\neq-1$.
\item[(vii)] if $2t\nmid p^m+1$, then $\delta(\alpha^{si}-\alpha^{sj})
    =t^2$, where $1\leq i,j< \frac{p^m+1}{t}$ and $i\neq j$.
\item[(vii)] if $2t\mid p^m+1$, then $\delta(\alpha^{si}-\alpha^{sj})
    =2t^2$, where $1\leq i,j< \frac{p^m+1}{t}$, $\alpha^{si},\alpha^{sj}\neq-1$, $i\neq j$ and $\alpha^{si}-\alpha^{sj}\neq\pm1$.
\item[(ix)] $\delta(b)=0$ for $b\neq \pm1,\alpha^{si}-\alpha^{sj}$, where $0\leq i,j< \frac{p^m+1}{t}$.\hfill$\square$
\end{itemize}
\end{lemma}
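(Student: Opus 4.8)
\emph{Proof proposal.} The plan is to follow, almost line for line, the argument behind Lemma \ref{lemma ds s f2}, the new features in odd characteristic $p>3$ being that the two exceptional inputs $x=0$ and $x=-1$ now produce \emph{different} values and that a couple more special $b$'s must be isolated.

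I would set $\triangle(x)=(x+1)^d-x^d$ and record $\triangle(0)=1$ and $\triangle(-1)=-(-1)^d=-1=p-1$ (here $(-1)^d=1$ since $p^m-1$, hence $d$, is even). For $x\in\mathbb{F}_{p^n}^{\sharp}$ there is a unique $(i,j)$ with $x\in C_{i,j}$ and there $\triangle(x)=\alpha^{si}-\alpha^{sj}$; as $\alpha^s$ has order $e:=(p^m+1)/t$, this depends only on $i,j\bmod e$, and both $(x+1)^d$ and $x^d$ lie in the subgroup $G=\langle\alpha^s\rangle$ of order $e$. Hence any $b$ with $\delta(b)>0$ equals $1$, $p-1$, or $g_1-g_2$ for some $g_1,g_2\in G$, which is clause (ix); and for clause (i) it is quickest to note $\triangle(x)=0\iff(1+1/x)^d=1\iff 1+1/x\in\mu_{t(p^m-1)}$, so since $x\mapsto 1+1/x$ is a bijection $\mathbb{F}_{p^n}^{*}\to\mathbb{F}_{p^n}\setminus\{1\}$ sending $-1$ to $0$, one gets $\delta(0)=|\mu_{t(p^m-1)}\setminus\{1\}|=t(p^m-1)-1$.

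For the rest, for each $b\neq0$ I would first list all $(g_1,g_2)\in G^2$ with $g_1-g_2=b$. Applying $g\mapsto g^{p^m}$ (which is $g\mapsto g^{-1}$ on $G$) to $g_1-g_2=b$ forces the value of $g_1g_2$, so $\{g_1,g_2\}$ is the root set of a monic quadratic with $(g_1+g_2)^2=b^2+4g_1g_2$, pinning it down up to the sign of $g_1+g_2$; the two signs give $(g_1,g_2)$ and $(-g_2,-g_1)$, the latter in $G^2$ exactly when $-1\in G$, i.e. $2t\mid p^m+1$. Concretely: $b=\pm2$ and $b=2\alpha^{si}$ force $g_1+g_2=0$, hence the single pair $(\pm1,\mp1)$, resp. $(\alpha^{si},-\alpha^{si})$, existing only if $2t\mid p^m+1$; $b=\pm1$ forces $g_1g_2=-1$ and $(g_1+g_2)^2=-3$, so $g_1,g_2$ are primitive sixth/third roots of unity, in $G$ exactly when $6t\mid p^m+1$; $b=\alpha^{si}-1$ and $b=1-\alpha^{si}$ each have an obvious pair, plus the companion $(-g_2,-g_1)$ when $2t\mid p^m+1$; and a generic $b=\alpha^{si}-\alpha^{sj}$ has $(\alpha^{si},\alpha^{sj})$, plus $(-\alpha^{sj},-\alpha^{si})$ when $2t\mid p^m+1$. (For $p>3$ one checks $2\alpha^{si},\alpha^{si}-1\notin\{0,\pm1\}$ automatically, since $\mu_{p^m+1}\cap\mathbb{F}_{p^m}^{*}=\{\pm1\}$ but $2\notin\{\pm1\}$, so the clauses do not collide.)

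Next, for each admissible pair $(g_1,g_2)$ the contributing $x\in\mathbb{F}_{p^n}^{\sharp}$ form a disjoint union $\bigsqcup_{0\le k_1,k_2<t}C_{i_0+ek_1,\,j_0+ek_2}$ of $t^2$ cells. Counting $|C_{i,j}|$ reduces, via $x+1=\psi^{i}u^{p^m+1}$, $x=\psi^{j}v^{p^m+1}$, to counting affine $\mathbb{F}_{p^n}$-points of $-\psi^{i}u^{p^m+1}+\psi^{j}v^{p^m+1}+1=0$, each $x$ being counted $(p^m+1)^2$ times; Lemma \ref{luo wang} applies with $n_1=n_2=p^m+1$, $k=1$, and using $ind_{\psi}(-1)\equiv0\pmod{p^m+1}$ one finds $r_1\equiv i$, $r_2\equiv j\pmod{p^m+1}$, that the points with $u=0$ or $v=0$ occur only when $i=0$ or $j=0$, and hence $|C_{i,j}|=1$ for $i,j\not\equiv0$ (case (iv)), $|C_{i,j}|=0$ for $0\neq i\equiv j$ (case (v)), and $|C_{0,0}|=p^m-2$ (case (i)). I expect the main obstacle to be the resulting bookkeeping, whose subtle point is an asymmetry: an entry $g_\ell=1$ kills the $t$ cells where the matching index is $\equiv0\pmod{p^m+1}$ — there the relevant factor lies in $\mathbb{F}_{p^m}^{*}$, forcing the \emph{other} power to equal $1$, impossible when that power is not $1$ — so an entry equal to $1$ costs $t$, whereas any entry $\neq1$ (e.g. $-1$) costs nothing; moreover $x=0$ adds $1$ to $\delta(1)$ and $x=-1$ adds $1$ to $\delta(p-1)$. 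Tallying $t^2$ per admissible pair, subtracting $t$ for each entry equal to $1$, and adding the exceptional inputs then yields exactly (i)--(ix). Finally I would cross-check the totals against $\sum_i\omega_i=p^n$ and $\sum_i i\,\omega_i=p^n$ from (\ref{two basic ide}), as in the companion theorems, and invoke (an analogue of) Lemma \ref{doubesame} to exclude unexpected coincidences among the values $\alpha^{si}-\alpha^{sj}$.
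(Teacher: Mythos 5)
Your proposal is correct and follows essentially the same route as the paper: the value of $\triangle(x)=(x+1)^d-x^d$ is constant on each cell $C_{i,j}$, the cell sizes are read off from Lemma \ref{luo wang}, and $x=0,-1$ are handled separately. Your two refinements --- computing $\delta(0)$ directly from $\gcd(d,p^n-1)=t(p^m-1)$ via the bijection $x\mapsto 1+1/x$, and classifying the representations $b=g_1-g_2$ with $g_1,g_2\in\langle\alpha^s\rangle$ through the forced product $g_1g_2=-b^{1-p^m}$ (which produces the companion pair $(-g_2,-g_1)$ precisely when $2t\mid p^m+1$) --- are sound and make fully explicit the case analysis that the paper omits for $p>3$.
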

By analyzing the results of Lemma \ref{lemma ds s f4}, we have the following results.
\begin{theorem}
Let $f(x)=x^d$ be a power mapping defined over $\mathbb{F}_{p^n}$, where $d=s(p^m-1)$, $n=2m$ and $\frac{p^m+1}{t}>3$. For any $b\in\mathbb{F}_{p^n}$, the differential spectrum of $f$ is shown as following.

(i) For case $2t\nmid p^m+1$, Table \ref{ds41table} holds.

(ii) For case $2t\mid p^m+1$, $6t\nmid p^m+1$, Table \ref{ds42table} holds.

(iii) For case $6t\mid p^m+1$, Table \ref{ds43table} holds.
\begin{table}[H]
\centering
 \caption{$2t\nmid p^m+1$}
\label{ds41table}
\begin{tabular}{l l}
\hline
$\delta(1,b)$&  Frequency \\
[0.5ex]
\hline
0 & $\frac{t^2p^n-(p^m+2-t)p^m-3t^2+t-1}{t^2}$ \\

$t^2$ & $\frac{p^{2m}+(2-3t)p^m+2t^2-3t+1}{t^2}$\\

$t^2-t$  &
  $\frac{2(p^m-t+1)}{t}$  \\

1 & 2\\
 
$t(p^m-1)-1$  & 1\\
\hline 
\end{tabular}
\end{table}
\begin{table}[H]
\centering
 \caption{$2t\mid p^m+1$, $6t\nmid p^m+1$}
\label{ds42table}
\begin{tabular}{l l}
\hline
$\delta(1,b)$&  Frequency \\
[0.5ex]
\hline
0 & $\frac{2t^2p^n-(p^m+2)p^{m}-6t^2-1}{2t^2}$ \\

$2t^2$  &
$\frac{p^{2m}+(2-6t)p^m+8t^2-6t+1}{2t^2}$ \\

$2t^2-t$ & $\frac{2(p^m-2t+1)}{t}$ \\

$t^2$  &
$\frac{p^m-2t+1}{t}$\\

$t^2-t$ & 2\\

1 & 2 \\

$t(p^m-1)-1$  &1 \\
\hline
\end{tabular}
\end{table}
\begin{table}[H]
\centering
 \caption{$6t\mid p^m+1$}
\label{ds43table}
\begin{tabular}{l l}
\hline
$\delta(1,b)$&  Frequency \\
[0.5ex]
\hline
$0$ & $\frac{2t^2p^n-(p^m+2)p^{m}-2t^2-1}{2t^2}$ \\

$2t^2$ &
  $\frac{p^{2m}+(2-6t)p^m+4t^2-6t+1}{2t^2}$ \\

$2t^2-t$   & $\frac{2(p^m-2t+1)}{t}$ \\

$t^2$ & $\frac{p^m-2t+1}{t}$ \\

$2t^2+1$ &  2\\

$t^2-t$ & 2\\

$t(p^m-1)-1$  & 1\\
\hline
\end{tabular}
\end{table}
\hfill$\square$
\end{theorem}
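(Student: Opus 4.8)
The plan is to follow the template already used above for $p=2$ and $p=3$: read off from Lemma~\ref{lemma ds s f4} the value $\delta(1,b)=\delta(b)$ for every $b\in\mathbb{F}_{p^n}$ (recall that $\delta(b)$ was defined through \eqref{ds eq s f2}), convert this value distribution into the frequencies $\omega_i=\#\{b:\delta(1,b)=i\}$ by a direct count, and then recover the two frequencies that the count does not hand over for free by imposing the two identities in \eqref{two basic ide}. The count is made routine by Lemma~\ref{doubesame}: the auxiliary values $b=\alpha^{si}-\alpha^{sj}$ are in bijection with the unordered pairs $\{\alpha^{si},\alpha^{sj}\}$, so only indices need to be counted.

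First I would treat the case $2t\nmid p^m+1$. Parts (i)--(iii) of Lemma~\ref{lemma ds s f4} give $\omega_{t(p^m-1)-1}=1$ (only $b=0$) and $\omega_{1}=2$ (namely $b=1$ and $b=p-1$), while $b=2$ and $b=p-2$ fall into $\omega_0$ since $\delta(2)=\delta(p-2)=0$ here. Parts (iv) and (vii) say that the remaining $b$ with $\delta(b)\neq 0$ are exactly $\pm(1-\alpha^{si})$ and $\alpha^{si}-\alpha^{sj}$ for $1\leq i,j<\frac{p^m+1}{t}$ with $i\neq j$; counting the former with the help of Lemma~\ref{doubesame} gives $\omega_{t^2-t}=\frac{2(p^m-t+1)}{t}$, and the two still-unknown frequencies $\omega_0$ and $\omega_{t^2}$ are then the solution of the $2\times 2$ linear system provided by \eqref{two basic ide}. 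This yields Table~\ref{ds41table}.

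When $2t\mid p^m+1$ one has $-1\in\langle\alpha^s\rangle$, and the bookkeeping shifts accordingly: the unique index $i_0$ with $\alpha^{si_0}=-1$ sends $b=2,p-2$ into $\omega_{t^2-t}$ (which is now just $2$), every other index sends $b=\pm(1-\alpha^{si})$ into $\omega_{2t^2-t}$, and the values $b=2\alpha^{si}$ with $\alpha^{si}\neq -1$ now populate $\omega_{t^2}$. Whether $6t\mid p^m+1$ decides whether a difference $\alpha^{si}-\alpha^{sj}$ can equal $\pm 1$ --- this occurs exactly when $\langle\alpha^s\rangle$ contains a primitive sixth root of unity --- and hence whether those two $b$-values merge into $b=\pm 1$ with $\delta=2t^2+1$ (Table~\ref{ds43table}) or stay separate with $\delta=2t^2$ (Table~\ref{ds42table}). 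In either subcase the direct count again leaves precisely the two frequencies $\omega_0$ and $\omega_{2t^2}$ undetermined, and \eqref{two basic ide} fixes them.

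The main obstacle is purely organizational: in each of the three cases one must track exactly which $b$-values coincide (among $\pm(1-\alpha^{si})$, $2\alpha^{si}$, $\alpha^{si}-\alpha^{sj}$ and the fixed values $\pm 1,\pm 2$) and feed the correct multiplicities into \eqref{two basic ide}; the genuinely delicate coincidence analysis (when $\alpha^{si}=-1$, when $\alpha^{si}-\alpha^{sj}=\pm 1$, governed by $2t\mid p^m+1$ and $6t\mid p^m+1$ respectively) has already been absorbed into Lemma~\ref{lemma ds s f4}. A convenient final sanity check is that, under the standing hypothesis $\frac{p^m+1}{t}>3$, every entry of the three tables comes out as a nonnegative integer.
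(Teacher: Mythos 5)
Your proposal is correct and follows essentially the same route as the paper: the paper likewise derives these tables by reading off $\delta(1,b)$ from Lemma~\ref{lemma ds s f4}, counting the resulting frequencies (with Lemma~\ref{doubesame} handling coincidences among the values $\alpha^{si}-\alpha^{sj}$), and closing the count with the identities \eqref{two basic ide}, exactly as in the written-out $p=2$ case. The coincidence bookkeeping you flag (e.g.\ excluding $\alpha^{sj}=-\alpha^{si}$ and the pairs giving $\pm1$ from the $\omega_{2t^2}$ count when $2t\mid p^m+1$) is indeed the only delicate point, and your treatment of it is consistent with the tables.
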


To empirically verify our theoretical findings, we conduct the following numerical experiments.
\begin{example} \rm{
Let $f(x)=x^d=x^{5^2-1}$ defined over $\mathbb{F}_{5^{4}}$. Then the differential spectrum of $f=x^d$ is given by
\begin{equation*}
 \begin{aligned}
  \mathbb{DS}=
  \bigg\{\omega_0=286,\omega_{1}=74,\omega_{2}=264,\omega_{23}=1 \bigg\}.
   \end{aligned}
\end{equation*}}
\end{example}

\begin{example} \rm{
Let $f(x)=x^d=x^{2(11^2-1)}$ defined over $\mathbb{F}_{11^{4}}$. Then the differential spectrum of $f=x^d$ is given by
\begin{equation*}
 \begin{aligned}
  \mathbb{DS}=
  \bigg\{\omega_0=10978,\omega_{1}=2,\omega_{2}=120,\omega_{4}=3540 ,\omega_{239}=1\bigg\}.
   \end{aligned}
\end{equation*} }
\end{example}

\section{Boomerang spectrum of $f(x)=x^{s(p^m-1)}$ over $\mathbb{F}_{p^n}$}\label{luo the bs}
In \cite{kpm-1}, Hu et al. determined the boomerang spectrum of $x^{s(p^m-1)}$ based on the differential spectrum of $x^{s(p^m-1)}$, where $n=2m$ and $\gcd(s,p^m+1)=1$. So what happens if $\gcd(s,p^m+1)\geq1$? Let $f(x)=x^d$, where $d=s(p^m-1)$, $n=2m$ and $\gcd(s,p^m+1)=t$. In this section,  by studying the number of rational points on some curves, and without relying on the differential spectrum of $f$, we determine the boomerang spectrum of $f$. For any fixed $b\in \mathbb{F}_{p^n}^*$, denote by $\beta(b)$ the number of solutions to 
\begin{eqnarray}\label{boom eq f2}
    \left \{
\begin{array}{lll}
x^d-y^d=b,\\
(x+1)^{d}-(y+1)^{d}=b,\\
\end{array}\right.
\end{eqnarray}
in $\mathbb{F}_{p^n}^2$.

\subsection{$p=2$}
In this subsection, we determine the boomerang spectrum of $f(x)=x^d$ over $\mathbb{F}_{2^n}$.
\begin{lemma}\label{bs 2 lemma}
Let $\alpha=\psi^{2^m-1}$ and $\frac{2^m+1}{t}>3$. With the notations above, we get 
\begin{itemize}
\item[(i)]  $\beta(1)=2$ for $3t\nmid2^m+1$. $\beta(1)=4t^4-4t^3+2t^2+2$ for $3t\mid2^m+1$. 
\item[(ii)] $\beta(1+\alpha^{si})=2t(t-1)(2^m+2t^2-4t)$ for $1\leq i<\frac{2^m+1}{t}$, $1+\alpha^{si}\neq\alpha^{\frac{2^m+1}{3}}, \alpha^{\frac{2(2^m+1)}{3}}$.
\item[(iii)] $\beta(\alpha^{si}+\alpha^{sj})=2t^4+2t^2(t-1)^2=4t^4-4t^3+2t^2$ for $1\leq i,j<\frac{2^m+1}{t}$, $i\neq j$, $\alpha^{si}+\alpha^{sj}\neq1$.
\item[(iv)] $\beta(\alpha^{\frac{2^m+1}{3}})=\beta(\alpha^{\frac{2(2^m+1)}{3}})=0$ for $3t\nmid2^m+1$. $\beta(\alpha^{\frac{2^m+1}{3}})
    =\beta(\alpha^{\frac{2(2^m+1)}{3}})=2t(t-1)(2^m+2t^2-4t)+4t^2$ for $3t\mid2^m+1$. 
\item[(v)] $\beta(b)=0$ for $b\neq0, 1, \alpha^{si}+\alpha^{sj}$, where $0\leq i,j<\frac{2^m+1}{t}$.
\end{itemize}
\end{lemma}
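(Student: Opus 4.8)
Write $N=\tfrac{2^m+1}{t}$ and let $\mu=\langle\alpha^s\rangle$ be the group of $N$-th roots of unity in $\mathbb{F}_{2^n}$; since $\mu\subseteq\mu_{2^m+1}$, every $\zeta\in\mu$ satisfies $\zeta^{2^m}=\zeta^{-1}$. Because $d=s(2^m-1)$, for $x\in\mathbb{F}_{2^n}^{\sharp}$ with $x\in C_{j_1,j_2}$ one has $x^d=\alpha^{sj_2}$ and $(x+1)^d=\alpha^{sj_1}$, both in $\mu$, while for $x\in\{0,1\}$ the pair $\bigl((x+1)^d,x^d\bigr)$ equals $(1,0)$ or $(0,1)$. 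Putting $A=(x+1)^d$, $a=x^d$, $B=(y+1)^d$, $b'=y^d$, the system (\ref{boom eq f2}) reads exactly $a+b'=b$ and $A+B=b$ (so that $\triangle(x)=\triangle(y)$ is automatic). With $S_b=\{(u,v)\in\mu^2:u+v=b\}$ and $w(A,a)=\#\{x\in\mathbb{F}_{2^n}^{\sharp}:(x+1)^d=A,\ x^d=a\}$, the plan is to evaluate
\[
\beta(b)=\sum_{(A,B)\in S_b}\sum_{(a,b')\in S_b} w(A,a)\,w(B,b')\ +\ (\text{solutions with }x\in\{0,1\}\text{ or }y\in\{0,1\}),
\]
which reduces the problem to computing $w$ and identifying $S_b$.

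For $w$: Lemma \ref{luo wang} with $n_1=n_2=2^m+1$ and $k=1$, exactly as in the proof of Lemma \ref{lemma ds s f2}, yields the sharp cell sizes $|C_{j_1,j_2}|=2^m-2$ when $j_1=j_2=0$, $|C_{j_1,j_2}|=1$ when $j_1\ne j_2$ with $j_1,j_2\ne0$, and $|C_{j_1,j_2}|=0$ otherwise. Summing over the $t$ lifts mod $2^m+1$ of each residue mod $N$ then gives $w(A,a)=t^2$ if $A\ne a$ and $A,a\ne1$; $w(A,a)=t(t-1)$ if $A=a\ne1$ or if exactly one of $A,a$ equals $1$; and $w(1,1)=2^m-2+(t-1)(t-2)$.

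For $S_b$: the argument of Lemma \ref{doubesame} (using $\zeta^{2^m}=\zeta^{-1}$ on $\mu$) shows that the map sending a $2$-element subset of $\mu$ to its sum is injective on subsets with nonzero sum, so $|S_b|\le2$ and $S_b$ is the obvious pair. Thus $S_b=\{(1,\alpha^{si}),(\alpha^{si},1)\}$ if $b=1+\alpha^{si}$ with $1\le i<N$; $S_b=\{(\alpha^{si},\alpha^{sj}),(\alpha^{sj},\alpha^{si})\}$ if $b=\alpha^{si}+\alpha^{sj}$ with $1\le i,j<N$, $i\ne j$; $S_1=\{(\omega,\omega^2),(\omega^2,\omega)\}$ exactly when $\mu$ contains the primitive cube roots of unity $\omega,\omega^2$, i.e.\ when $3t\mid 2^m+1$, and $S_1=\varnothing$ otherwise; and $S_b=\varnothing$ whenever $b\notin\{0,1\}\cup(\mu+\mu)$, which is item (v). The value $\alpha^{(2^m+1)/3}=\omega=1+\omega^2$ is the unique value of the form $1+\alpha^{si}$ lying in $\mu$, which is why it and $\omega^2$ are split off into item (iv). When $S_b=\{(\beta_1,\beta_2),(\beta_2,\beta_1)\}$ the double sum collapses to $2\,w(\beta_1,\beta_1)w(\beta_2,\beta_2)+2\,w(\beta_1,\beta_2)w(\beta_2,\beta_1)$, and inserting the three possible values of $w$ according to whether $\beta_1$ or $\beta_2$ equals $1$ reproduces the numbers in (i)--(iv) (for $S_1=\varnothing$ the sum is empty, giving $\beta(1)=2$). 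Finally a short case check shows a solution with $x\in\{0,1\}$ or $y\in\{0,1\}$ occurs only for $b=1$ (the pairs $(0,1),(1,0)$, hence $+2$) and, when $3t\mid 2^m+1$, for $b\in\{\omega,\omega^2\}$ (each of $x=0$, $x=1$, $y=0$, $y=1$ contributing $t^2$ further solutions, hence $+4t^2$); these give the additive constants.

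The crux is the determination of $S_b$: proving $|S_b|\le2$ and pinning down the exceptional $b$. This is exactly where the characteristic-two dichotomy $3t\mid 2^m+1$ appears — it is the condition for $1\in\mu+\mu$, equivalently for $\mu$ to contain primitive cube roots of unity — and where the hypothesis $\tfrac{2^m+1}{t}>3$ is needed, to rule out further additive coincidences among elements of $\mu$. Once $S_b$ is known, the rest is bookkeeping with the $\{0,1,2^m-2\}$-valued cell sizes.
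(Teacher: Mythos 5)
Your proposal is correct and takes essentially the same route as the paper: both arguments decompose $\mathbb{F}_{2^n}^{\sharp}$ into the cells $C_{j_1,j_2}$, obtain the cell sizes from Lemma \ref{luo wang}, use Lemma \ref{doubesame} to pin down the unique unordered pair in $\mu$ summing to $b$ (hence which product cells can contain solutions), and finish by multiplying cell counts and adding the contributions from $x,y\in\{0,1\}$. Your packaging via $w(A,a)$ and $S_b$ is a tidier way of organizing the paper's Table \ref{k1234table}-style case lists, but the underlying computation is identical.
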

\begin{proof}
(i) When $b=1$ and $3t\nmid2^m+1$, (\ref{boom eq f2}) have only two solutions $(x,y)=(0,1), (1,0)$ in
$\mathbb{F}_{2^n}\times\mathbb{F}_{2^n}$.

If $b=1$ and $3t\mid2^m+1$, we let 
\begin{equation*}
  \frac{ls(2^n-1)}{2^m+1}\equiv \frac{2^n-1}{3}\,\,({\rm mod}\,\,2^n-1),
\end{equation*}
where $1\leq l<\frac{2^m+1}{t}$. Record $C_{k_1,k_2}$ in (\ref{c12}). By Lemma \ref{luo wang}, when $(x,y)\in C_{k_1,k_2}\times C_{k_3,k_4}$, we have the following results.
\begin{table}[H]
\centering
 \caption{Number of solutions to (\ref{boom eq f2}) in 
 $C_{k_1,k_2}\times C_{k_3,k_4}$ when $b=1$}
\label{k1234table}
\begin{tabular}{cc}
\hline
$(k_1,k_2)\times(k_3,k_4)$&
Number of solutions to (\ref{boom eq f2}) when $b=1$ (total) \\
[0.5ex]
\hline
 \tabincell{c}{$(2l+\frac{(2^m+1)j_1}{t},
  2l+\frac{(2^m+1)j_2}{t})$\\
  $\times
  (l+\frac{(2^m+1)j_3}{t},l+\frac{(2^m+1)j_4}{t})$
  \\$0\leq j_i<t$, $0\leq i\leq4$} & $t^4-2t^3+t^2$ \\
\hline
\tabincell{c}{$(l+\frac{(2^m+1)j_1}{t},
l+\frac{(2^m+1)j_2}{t})$\\
$\times(2l+\frac{(2^m+1)j_3}{t},
  2l+\frac{(2^m+1)j_4}{t})$\\
  $0\leq j_i<t$, $0\leq i\leq4$} & $t^4-2t^3+t^2$ \\
\hline
\tabincell{c}{$(2l+\frac{(2^m+1)j_1}{t},
  l+\frac{(2^m+1)j_2}{t})$\\
  $\times
  (l+\frac{(2^m+1)j_3}{t},
  2l+\frac{(2^m+1)j_4}{t})$\\
  $0\leq j_i<t$, $0\leq i\leq4$} & $t^4$ \\
\hline
\tabincell{c}{$(l+\frac{(2^m+1)j_1}{t},
  2l+\frac{(2^m+1)j_2}{t})$\\
  $\times
  (2l+\frac{(2^m+1)j_3}{t},
  l+\frac{(2^m+1)j_4}{t})$\\
  $0\leq j_i<t$, $0\leq i\leq4$} & $t^4$ \\
\hline
Other Cases &  $0$ \\
\hline
\end{tabular}
\end{table}
Note that (\ref{boom eq f2}) also have two solutions $(x,y)=(0,1), (1,0)$ in
$\mathbb{F}_{2^n}\times\mathbb{F}_{2^n}$. So, $\beta(1)=4t^4-4t^3+2t^2+2$ for $3t\mid2^m+1$. 

(ii)-(iv) Let $\frac{2^m+1}{t}=h$. When $b=1+\alpha^s$, (\ref{boom eq f2}) have solutions only in 
\begin{equation*}
  \bigcup\limits_{\substack {0\leq j_i<t,\\0\leq i\leq4}}\bigg(C_{1+hj_1,hj_2}
  \times C_{hj_3,1+hj_4}\bigcup
  C_{hj_1,1+hj_2}
  \times C_{1+hj_3,hj_4}\bigg),
\end{equation*}
\begin{equation*}
  \bigcup\limits_{\substack {0\leq j_i<t,\\0\leq i\leq4}}\bigg(C_{hj_1,hj_2}
  \times C_{1+hj_3,1+hj_4}\bigcup
  C_{1+hj_1,1+hj_2}
  \times C_{hj_3,hj_4}\bigg).
\end{equation*}
When $b=\alpha^s+\alpha^{2s}$, (\ref{boom eq f2}) have solutions only in 
\begin{equation*}
  \bigcup\limits_{\substack {0\leq j_i<t,\\0\leq i\leq4}}\bigg(C_{2+hj_1,1+hj_2}
  \times C_{1+hj_3,2+hj_4}\bigcup
  C_{1+hj_1,2+hj_2}
  \times C_{2+hj_3,1+hj_4}\bigg),
\end{equation*}
\begin{equation*}
  \bigcup\limits_{\substack {0\leq j_i<t,\\0\leq i\leq4}}\bigg(C_{1+hj_1,1+hj_2}
  \times C_{2+hj_3,2+hj_4}\bigcup
  C_{2+hj_1,2+hj_2}
  \times C_{1+hj_3,1+hj_4}\bigg).
\end{equation*}
When $b=\alpha^{\frac{2^m+1}{3}}$ and $3t\nmid2^m+1$, (\ref{boom eq f2}) have no solutions in
$\mathbb{F}_{2^n}\times\mathbb{F}_{2^n}$.
When $b=\alpha^{\frac{2^m+1}{3}}$ and $3t\mid2^m+1$, (\ref{boom eq f2}) have solutions only in 
\begin{equation*}
  \bigcup\limits_{\substack {0\leq j_i<t,\\0\leq i\leq4}}\bigg(C_{hj_1,hj_2}
  \times
  C_{l+hj_3,l+hj_4}
\bigcup C_{l+hj_1,l+hj_2}
  \times
  C_{hj_3,hj_4}
  \bigg),
\end{equation*}
\begin{equation*}
  \bigcup\limits_{\substack {0\leq j_i<t,\\0\leq i\leq4}}\bigg(C_{l+hj_1,hj_2}
  \times
  C_{hj_3,l+hj_4}
\bigcup C_{hj_1,l+hj_2}
  \times
  C_{l+hj_3,hj_4}
  \bigg),
\end{equation*}
\begin{equation*}
  \bigcup\limits_{\substack {0\leq j_i<t,\\0\leq i\leq4}}\bigg(C_{2l+hj_1,l+hj_2}
  \times \{0\}\bigcup 
  \{0\}\times C_{2l+hj_3,l+hj_4}\bigg),
\end{equation*}
\begin{equation*}
  \bigcup\limits_{\substack {0\leq j_i<t,\\0\leq i\leq4}}\bigg(C_{l+hj_1,2l+hj_2}
  \times \{1\}\bigcup 
  \{1\}\times C_{l+hj_3,2l+hj_4}\bigg),
\end{equation*}
where $\frac{ls(2^n-1)}{2^m+1}\equiv \frac{2^n-1}{3}\,\,({\rm mod}\,\,2^n-1)$.
By Lemma \ref{luo wang}, we obtain the result.

(v) The desired conclusion is an immediate consequence of combining results (i)-(iv) with Lemma \ref{doubesame}.
\end{proof}

By Lemma \ref{bs 2 lemma}, we determine the boomerang spectrum of $f$ in the following result.
\begin{theorem}
Let $f(x)=x^d$ defined over $\mathbb{F}_{2^n}$, where $d=s(2^m-1)$, $n=2m$ and $\gcd(s,2^m+1)=t$. For any $b\in\mathbb{F}_{2^n}^*$, the boomerang spectrum of $f$ is shown as following.

(i) For case $3t\nmid 2^m+1$, Table \ref{bs21table} holds.

(ii) For case $3t\mid 2^m+1$, Table \ref{bs22table} holds.
\begin{table}[H]
\centering
 \caption{$3t\nmid2^m+1$}
\label{bs21table}
\begin{tabular}{l l}
\hline
$\beta_f(1,b)$   &  Frequency \\
[0.5ex]
\hline
$0$ & $\frac{(2t^2-1)2^n-(2-t)2^m-4t^2+t-1}{2t^2}$ \\

2 & 1\\

$4t^4-4t^3+2t^2$   & $\frac{2^n+(2-3t)2^m+2t^2-3t+1}{2t^2}$ \\

$2t(t-1)(2^m+2t^2-4t)$ & $\frac{2^m-t+1}{t}$ \\
\hline
\end{tabular}
\end{table}
\begin{table}[H]
\centering
 \caption{$3t\mid2^m+1$}
\label{bs22table}
\begin{tabular}{l l}
\hline
$\beta_f(1,b)$   &  Frequency  \\
[0.5ex]
\hline
$0$ & $\frac{(2t^2-1)2^n-(2-t)2^m-6t^2+t-1}{2t^2}$ \\

$4t^4-4t^3+2t^2+2$ & 1\\

$4t^4-4t^3+2t^2$   & $\frac{2^n+(2-3t)2^m-3t+1}{2t^2}$ \\

$2t(t-1)(2^m+2t^2-4t)$ & $\frac{2^m-t+1}{t}$ \\

$2t(t-1)(2^m+2t^2-4t)+4t^2$ & 2\\
\hline
\end{tabular}
\end{table}
\end{theorem}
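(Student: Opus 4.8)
This is a counting corollary of Lemma~\ref{bs 2 lemma}: parts (i)--(iv) pin down $\beta_f(1,b)$ on every $b$ that occurs as a difference $\alpha^{si}-\alpha^{sj}$, part (v) says $\beta_f(1,b)=0$ elsewhere, so it remains to count, for each non-zero value $v$ appearing in that list, how many $b\in\mathbb{F}_{2^n}^*$ satisfy $\beta_f(1,b)=v$, and then to read off the frequency of the value $0$ from the single identity $\sum_i\nu_i=2^n-1$.

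The plan is as follows. Write $h=\frac{2^m+1}{t}$. Since $\alpha=\psi^{2^m-1}$ has order $2^m+1$ and $\gcd(s,2^m+1)=t$, the set $\{\alpha^{si}:0\le i<h\}$ is exactly the subgroup $\langle\alpha^t\rangle$ of order $h$, with $i\mapsto\alpha^{si}$ a bijection onto it. Lemma~\ref{doubesame} then says distinct unordered pairs $\{\alpha^{si},\alpha^{sj}\}$ give distinct sums, and $i\mapsto 1+\alpha^{si}$ is injective; so the classes of $b$ occurring in Lemma~\ref{bs 2 lemma} --- $b=1$; $b=1+\alpha^{si}$ with $1\le i<h$; $b=\alpha^{si}+\alpha^{sj}$ with $1\le i,j<h$, $i\ne j$; and, when $3t\mid 2^m+1$, $b\in\{\alpha^{(2^m+1)/3},\alpha^{2(2^m+1)/3}\}$ --- are enumerated without repetition once the overlaps between them are understood, and each class is then counted by an elementary $\binom{h}{2}$-type expression.

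Determining those overlaps is the one delicate step, and it is exactly what separates the two sub-tables. In characteristic $2$, if $u+v=1$ with $u,v\in\langle\alpha\rangle$, then writing $v=1+u$ and using $z^{2^m+1}=1$ on $\langle\alpha\rangle$ forces $u^2+u+1=0$, i.e.\ $u$ is a primitive cube root of unity; hence such a pair lies inside $\langle\alpha^t\rangle$ precisely when $3t\mid 2^m+1$, and then it is the unique pair $\{\alpha^{(2^m+1)/3},\alpha^{2(2^m+1)/3}\}$ (this is the ``$b\ne1$'' exclusion in Lemma~\ref{bs 2 lemma}(iii)). Moreover $1+\alpha^{(2^m+1)/3}=\alpha^{2(2^m+1)/3}$ and conversely, so when $3t\mid 2^m+1$ the two cube-root elements appear among the values $1+\alpha^{si}$ and must be split off from that class (this is why Lemma~\ref{bs 2 lemma}(ii) excludes them and (iv) treats them on their own); when $3t\nmid 2^m+1$ none of these coincidences occur. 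Matching the resulting counts to the $\beta$-values of Lemma~\ref{bs 2 lemma} and substituting $2^n=2^{2m}$ gives the frequencies of all $\beta_f(1,b)>0$ in Tables~\ref{bs21table} and \ref{bs22table}.

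Finally, $\nu_0=2^n-1-\sum_{i>0}\nu_i$, and a routine simplification (again using $2^n=2^{2m}$) produces the first row of each table. The main obstacle is the third paragraph: correctly bookkeeping which $b$ belong to two of the four classes simultaneously, so that the $3t\mid 2^m+1$ table is obtained from the $3t\nmid 2^m+1$ one precisely by removing the sum-$1$ pair from Lemma~\ref{bs 2 lemma}(iii) and by promoting the two cube-root elements to the value $2t(t-1)(2^m+2t^2-4t)+4t^2$ coming from their extra ``boundary'' solutions; once this is sorted out, only elementary arithmetic remains.
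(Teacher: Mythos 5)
Your plan is essentially the same counting argument the paper itself relies on (the paper states this theorem as a direct consequence of Lemma~\ref{bs 2 lemma} with no further written proof): identify the $b$ with $\beta(b)\neq 0$ as $\{\alpha^{si}+\alpha^{sj}:0\le i<j<h\}$ together with $b=1$, where $h=\frac{2^m+1}{t}$, use Lemma~\ref{doubesame} to see that these $\binom{h}{2}$ sums are pairwise distinct, sort out the coincidences caused by the primitive cube roots of unity, and recover $\nu_0$ from $\sum_i\nu_i=2^n-1$. For Table~\ref{bs21table} this goes through exactly as you describe.

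For the case $3t\mid 2^m+1$, however, your own bookkeeping does not reproduce Table~\ref{bs22table}, so the proposal does not prove the statement as written. As you correctly observe, when $3t\mid 2^m+1$ the elements $\alpha^{(2^m+1)/3}$ and $\alpha^{2(2^m+1)/3}$ lie in $\langle\alpha^{s}\rangle$ and satisfy $1+\alpha^{(2^m+1)/3}=\alpha^{2(2^m+1)/3}$, so they are two of the $h-1$ elements $1+\alpha^{si}$, $1\le i<h$. ``Promoting'' them to the value $2t(t-1)(2^m+2t^2-4t)+4t^2$ of Lemma~\ref{bs 2 lemma}(iv) leaves only $h-3=\frac{2^m-3t+1}{t}$ elements with $\beta(b)=2t(t-1)(2^m+2t^2-4t)$, whereas Table~\ref{bs22table} lists $\frac{2^m-t+1}{t}=h-1$ for that row, and correspondingly your $\nu_0$ comes out larger by $2$ than the table's first row. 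In fact the nonzero rows of Table~\ref{bs22table} account for $1+\bigl(\binom{h-1}{2}-1\bigr)+(h-1)+2=\binom{h}{2}+2$ values of $b$, which exceeds the at most $\binom{h}{2}$ values permitted by Lemma~\ref{bs 2 lemma}(v) combined with Lemma~\ref{doubesame}; the table is therefore incompatible with the lemma it is meant to follow from, and no recount along your lines can close that gap. You should either flag this as an error in Table~\ref{bs22table} (adjusting the two affected frequencies) or locate an error in Lemma~\ref{bs 2 lemma}(ii)/(iv); as it stands, the final claim of your plan --- that the resulting counts match both tables --- is the step that fails.
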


\subsection{$p=3$}
In this subsection, we characterize the boomerang spectrum of $f(x)=x^d$ over $\mathbb{F}_{3^n}$, employing techniques analogous to those established in Lemma \ref{bs 2 lemma}, with similar derivations omitted.

\begin{lemma}\label{bs 3 lemma}
Let $\alpha=\psi^{3^m-1}$ and $\frac{3^m+1}{t}>3$. Employing the notation introduced earlier, we demonstrate that:
\begin{itemize}
\item[(i)] $\beta(1)=\beta(2)=0$
     for $2t\nmid3^m+1$. $\beta(1)=\beta(2)=t(t-1)(3^m+t^2-3t+2)$ for $2t\mid3^m+1$.
\item[(ii)] if $2t\nmid3^m+1$, then $\beta(1-\alpha^{si})
    =\beta(\alpha^{si}-1)
    =t(t-1)(3^m+t^2-3t)$, where $1\leq i< \frac{3^m+1}{t}$.
\item[(iii)] if $2t\mid3^m+1$, then $\beta(1-\alpha^{si})
    =\beta(\alpha^{si}-1)
    =t(t-1)(3^m+4t^2-4t)$, where $1\leq i< \frac{3^m+1}{t}$, $\alpha^{si}\neq2$.
\item[(iv)] if $2t\nmid3^m+1$, then $\beta(2\alpha^{si})
    =0$, where $1\leq i< \frac{3^m+1}{t}$.
\item[(v)] if $2t\mid3^m+1$, then $\beta(2\alpha^{si})
    =t^4-2t^3+t^2$, where $1\leq i< \frac{3^m+1}{t}$, $\alpha^{si}\neq2$.
\item[(vi)] if $2t\nmid3^m+1$, then $\beta(\alpha^{si}-\alpha^{sj})
    =t^4-2t^3+t^2$, where $1\leq i,j< \frac{3^m+1}{t}$, $i\neq j$.
\item[(vii)] if $2t\mid3^m+1$, then $\beta(\alpha^{si}-\alpha^{sj})
    =4t^4-4t^3+2t^2$, where $1\leq i,j< \frac{3^m+1}{t}$, $\alpha^{si},\alpha^{sj}\neq2$, $i\neq j$.
\item[(viii)] $\beta(b)=0$ for $b\neq0, \pm1, \alpha^{si}-\alpha^{sj}$, where $0\leq i,j< \frac{3^m+1}{t}$. \hfill$\square$
\end{itemize}
\end{lemma}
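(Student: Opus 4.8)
The plan is to follow exactly the template of Lemma \ref{bs 2 lemma}, adapting the case analysis to characteristic $3$. First I would set up the correspondence between solutions of the boomerang system (\ref{boom eq f2}) and rational points on the curves of Lemma \ref{luo wang}. Writing $h=\frac{3^m+1}{t}$ and $\alpha=\psi^{3^m-1}$, one uses $\bigsqcup_{0\leq i,j\leq 3^m} C_{i,j}=\mathbb{F}_{3^n}^{\sharp}$ to partition $(x,y)\in(\mathbb{F}_{3^n}^{\sharp})^2$ into $(3^m+1)^4$ cells $C_{k_1,k_2}\times C_{k_3,k_4}$; on each cell $\triangle(x)=x^d-(x+1)^d$ takes the constant value $\alpha^{sk_2}-\alpha^{sk_1}$ (up to sign conventions), so (\ref{boom eq f2}) for a fixed $b$ forces $k_1,k_2$ and $k_3,k_4$ to lie among finitely many residues determined by $b$. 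For each admissible cell the remaining constraint is a pair of linear equations in $x,y$ together with index conditions modulo $3^m+1$, i.e. a system counted by the curve $\alpha x^{n_1}+\beta y^{n_2}+1=0$ with $n_1=n_2=\frac{p^n-1}{\#C}$-type parameters; Lemma \ref{luo wang}(i)--(v) then gives the exact count per cell. One must also track the boundary contributions coming from $x\in\{0,-1,1\}$ or $y\in\{0,-1,1\}$, which contribute the small additive terms (the $+2$, the $+4t^2$, etc.), exactly as the $(0,1),(1,0)$ solutions did in the $p=2$ case.

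Second, I would organize the value set of $b$ into the classes appearing in the statement: $b=\pm1$, $b=\pm(1-\alpha^{si})$, $b=2\alpha^{si}$, and general $b=\alpha^{si}-\alpha^{sj}$ with $i\neq j$. For each class I would determine precisely which cells $C_{k_1,k_2}\times C_{k_3,k_4}$ are admissible — this is where the dichotomy $2t\mid 3^m+1$ versus $2t\nmid 3^m+1$ enters, since $2t\mid 3^m+1$ is exactly the condition for $-1=\alpha^{s\cdot h/2}$ to be an $s$-th power of an $\alpha$-power, i.e. for $2$ to be representable as $\alpha^{si}-\alpha^{sj}$ and for the extra solution families (and the corresponding $\{0\},\{-1\}$ boundary cells) to appear. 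Then I sum the per-cell counts from Lemma \ref{luo wang}: when $r_1=r_2=0$ one gets the large $p^n+\cdots$ contributions dividing down to terms like $t^4-2t^3+t^2$; when indices are unequal modulo $t$ one gets $t^4$; the cross terms combine to $2t^4+2t^2(t-1)^2=4t^4-4t^3+2t^2$ and $t(t-1)(3^m+t^2-3t+\varepsilon)$ type expressions, matching items (i)--(vii). Item (viii) is then immediate: by Lemma \ref{doubesame} no $b$ outside the listed set can arise as $\alpha^{si}-\alpha^{sj}$ with the indices in range, so no admissible cell exists and $\beta(b)=0$.

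The main obstacle is the bookkeeping in the characteristic-$3$ cross terms, specifically disentangling which of the five cases of Lemma \ref{luo wang} applies to each of the four-index cells and avoiding double-counting the solutions with a coordinate in $\{0,-1,1\}$; unlike $p=2$, here $-1\neq 1$, so $b=1$ and $b=2=-1$ (and $b=p-1$) behave differently and the ranges $\alpha^{si}\neq 2$ must be excised consistently across items (ii)--(vii). A secondary subtlety is verifying that the condition $\frac{3^m+1}{t}>3$ guarantees the index residues one is forced into are genuinely distinct, so that the cells are nonempty and the curve-point counts of Lemma \ref{luo wang} are the applicable ones (the $h>3$ hypothesis rules out collisions such as $2l\equiv l$ or $2l\equiv 0 \pmod{h}$ coming from the $3$-torsion analogue, which in characteristic $3$ does not occur, but analogous small-$h$ degeneracies must still be excluded). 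Once the cell classification is fixed, the summation is routine and the two identities analogous to (\ref{two basic ide}) — or rather a direct count — pin down the remaining frequency $\nu_0$ in the theorem that follows.
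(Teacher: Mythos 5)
Your proposal follows essentially the same route as the paper: the paper itself omits the proof of this lemma, stating only that it is obtained by the techniques of Lemma \ref{bs 2 lemma}, and your plan --- partitioning $(\mathbb{F}_{3^n}^{\sharp})^2$ into cells $C_{k_1,k_2}\times C_{k_3,k_4}$, identifying the admissible cells for each class of $b$ (with the dichotomy governed by whether $-1$ is a power of $\alpha^{s}$, i.e.\ $2t\mid 3^m+1$), counting points per cell via Lemma \ref{luo wang}, adding the boundary contributions from coordinates in $\{0,-1\}$, and invoking Lemma \ref{doubesame} for item (viii) --- is precisely that argument. The ingredients and case structure you identify are correct; only the routine per-cell bookkeeping remains to be carried out.
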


Building upon Lemma \ref{bs 3 lemma}, we derive the following results.
\begin{theorem}
Let $f(x)=x^d$ be a power mapping defined over $\mathbb{F}_{3^n}$, where $d=s(3^m-1)$, $n=2m$, $\gcd(s,3^m+1)=t$ and $\frac{3^m+1}{t}>3$. For all $b\in\mathbb{F}_{3^n}^*$, the boomerang spectrum of $f$ is shown as following.

(i) For case $2t\nmid 3^m+1$, Table \ref{bs31table} holds.

(ii) For case $2t\mid 3^m+1$, Table \ref{bs32table} holds.
\begin{table}[H]
\centering
 \caption{$2t\nmid3^m+1$}
\label{bs31table}
\begin{tabular}{l l}
\hline
$\beta_f(1,b)$   &  Frequency  \\
[0.5ex]
\hline
$0$ & $\frac{(t^2-1)3^n-(2-t)3^m-t^2+t-1}{t^2}$ \\

$t^4-2t^3+t^2$   & $\frac{3^n+(2-3t)3^m+2t^2-3t+1}{t^2}$ \\

$t(t-1)(3^m+t^2-3t)$ & $\frac{2(3^m-t+1)}{t}$ \\
\hline
\end{tabular}
\end{table}
\begin{table}[H]
\centering
 \caption{$2t\mid3^m+1$}
\label{bs32table}
\begin{tabular}{l l}
\hline
$\beta_f(1,b)$   &  Frequency  \\
[0.5ex]
\hline
$0$ & $\frac{2t^23^n-(3^m+2)3^{m}-2t^2-1}{2t^2}$ \\

$t^4-2t^3+t^2$   & $\frac{3^m-2t+1}{t}$ \\

$4t^4-4t^3+2t^2$ & $\frac{3^{n}+(2-6t)3^m+8t^2-6t+1}{2t^2}$ \\

$t(t-1)(3^m+4t^2-4t)$ & $\frac{2(3^m-2t+1)}{t}$ \\

$t(t-1)(3^m+t^2-3t+2)$ & $2$ \\
\hline
\end{tabular}
\end{table}
 \hfill$\square$
\end{theorem}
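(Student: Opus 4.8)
The plan is to extract the two tables from Lemma~\ref{bs 3 lemma} by a direct counting argument, exactly as the preceding $p=2$ theorem follows from Lemma~\ref{bs 2 lemma}. Write $h=\frac{3^m+1}{t}$; since $\alpha=\psi^{3^m-1}$ has order $3^m+1$ and $\gcd(s,3^m+1)=t$, the set $G=\{\alpha^{si}:0\le i<h\}$ is the cyclic subgroup of $\mathbb{F}_{3^n}^{*}$ of order $h$, it lies inside $\mu_{3^m+1}$, and the hypothesis $\frac{3^m+1}{t}>3$ is just $h>3$. The dichotomy in the theorem is the parity of $h$: as $3^m+1$ is even, $2t\nmid 3^m+1$ forces $h$ odd and $-1\notin G$, whereas $2t\mid 3^m+1$ gives $h$ even, $-1\in G$, with a unique $i_0=h/2$ for which $\alpha^{si_0}=2$.

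By Lemma~\ref{bs 3 lemma} the only $b\in\mathbb{F}_{3^n}^{*}$ with $\beta(b)\neq0$ are $b=\pm1$, $b=\pm(1-\alpha^{si})$, $b=2\alpha^{si}$ (only when $2t\mid 3^m+1$), and $b=\alpha^{si}-\alpha^{sj}$ with $i\neq j$, under the index restrictions stated there; every other nonzero $b$ feeds into $\nu_0$. To pass from this list to frequencies I need that distinct admissible parameters give distinct $b$ and that these families essentially do not overlap, and here a little characteristic-$3$ arithmetic does all the work. The tool is the Frobenius $x\mapsto x^{3^m}$, which inverts every element of $\mu_{3^m+1}\supseteq G$: applied to $\alpha^{si}-\alpha^{sj}=\alpha^{si'}-\alpha^{sj'}$ (with $i\neq j$) and combined with the original relation it yields $\alpha^{si}\alpha^{sj}=\alpha^{si'}\alpha^{sj'}$, so $\{\alpha^{si},-\alpha^{sj}\}$ and $\{\alpha^{si'},-\alpha^{sj'}\}$ are root sets of the same monic quadratic, forcing $(i,j)=(i',j')$ or the swap $\alpha^{si}=-\alpha^{sj'}$, $\alpha^{si'}=-\alpha^{sj}$ (possible only if $-1\in G$). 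Likewise, using that $X^2+X+1=(X-1)^2$ in characteristic $3$, one checks that $\alpha^{si}-\alpha^{sj}=\pm1$ forces one of the two terms to be $\mp1$, that $\alpha^{si}+\alpha^{sj}=-1$ forces $\alpha^{si}=\alpha^{sj}=1$, and that $\alpha^{si}-\alpha^{sj}\in G$ happens exactly when $i-j\equiv i_0\ (\bmod\ h)$, in which case the difference equals $\alpha^{sj}=2\alpha^{s(j+i_0)}$. Thus every possible coincidence among the families is either excluded by the index restrictions or is precisely this last one, which just merges the degenerate differences into the $2\alpha^{sk}$ family.

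With the bookkeeping settled the counting is mechanical. For $2t\nmid 3^m+1$: the $2(h-1)$ numbers $\pm(1-\alpha^{si})$ are distinct and all carry $t(t-1)(3^m+t^2-3t)$; the $(h-1)(h-2)$ differences $\alpha^{si}-\alpha^{sj}$ are distinct and all carry $t^4-2t^3+t^2$; $b=\pm1$ and $b=2\alpha^{si}$ give $\beta(b)=0$; hence $\nu_0=3^n-1-2(h-1)-(h-1)(h-2)$. For $2t\mid 3^m+1$: removing the index $i_0$ leaves $2(h-2)$ numbers $\pm(1-\alpha^{si})$ of value $t(t-1)(3^m+4t^2-4t)$; the $h-2$ degenerate differences together with the $h-2$ genuine $b=2\alpha^{si}$ give $h-2$ values carrying $t^4-2t^3+t^2$; the $(h-2)(h-4)$ remaining ordered pairs collapse two-to-one under $(i,j)\mapsto(i_0+j,\,i-i_0)$ to $(h-2)(h-4)/2$ values carrying $4t^4-4t^3+2t^2$; and $b=\pm1$ carries $t(t-1)(3^m+t^2-3t+2)$; hence $\nu_0=3^n-1-h^2/2$. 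Substituting $h=\frac{3^m+1}{t}$, $3^{2m}=3^n$ and simplifying, together with $\sum_i\nu_i=3^n-1$, reproduces Tables~\ref{bs31table} and~\ref{bs32table}.

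The place where I expect real care to be needed is making the family decomposition airtight when $2t\mid 3^m+1$: one must confirm that, apart from the degenerate-difference/$2\alpha^{sk}$ merge, the families $\{\pm1\}$, $\{\pm(1-\alpha^{si})\}$, $\{2\alpha^{si}\}$, $\{\alpha^{si}-\alpha^{sj}\}$ are pairwise disjoint, so that they partition $\{b:\beta(b)\neq0\}$ with the claimed multiplicities. The characteristic-$3$ quadratic identity above is exactly what keeps each such check to one line; any residual exclusion can alternatively be read off from Lemma~\ref{luo wang} applied to $\alpha^{si}x^{h}+\alpha^{sj}y^{h}+1=0$. Everything else is the routine algebra parallel to the $p=2$ cases.
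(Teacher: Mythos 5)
Your proposal is correct and follows the route the paper intends but leaves implicit: read off the nonzero values of $\beta(b)$ from Lemma~\ref{bs 3 lemma}, count the distinct $b$ in each family via the Frobenius/quadratic distinctness argument (the characteristic-$3$ analogue of Lemma~\ref{doubesame}), and obtain $\nu_0$ from $\sum_i\nu_i=3^n-1$; your family sizes $2(h-1)$, $(h-1)(h-2)$ resp.\ $2$, $2(h-2)$, $h-2$, $(h-2)(h-4)/2$ with $h=\frac{3^m+1}{t}$ reproduce both tables exactly. You also correctly resolve the only delicate point, namely that for $2t\mid 3^m+1$ the degenerate differences $\alpha^{si}-\alpha^{sj}$ with $i-j\equiv i_0\ ({\rm mod}\ h)$ coincide with the $2\alpha^{sk}$ family and must be counted there (with value $t^4-2t^3+t^2$) rather than under item (vii) of the lemma.
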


\subsection{$p>3$}
In this subsection, we compute the boomerang spectrum of $f(x)=x^d$ over $\mathbb{F}_{p^n}$ ($p>3$) using methods parallel to Lemma \ref{bs 2 lemma}, and thus we skip similar proof steps.
\begin{lemma}\label{bs 4 lemma}
Let $\alpha=\psi^{p^m-1}$ and $\frac{p^m+1}{t}>3$. With the notations above, we have 
\begin{itemize}
\item[(i)] $\beta(1)=\beta(p-1)=0$
     for $6t\nmid p^m+1$. $\beta(1)=\beta(p-1)=4t^4-4t^3+2t^2$ for $6t\mid p^m+1$.
\item[(ii)] $\beta(\pm\alpha^{\frac{2^m+1}{3}})=\beta(\pm\alpha^{\frac{2(2^m+1)}{3}})=0$ for $6t\nmid p^m+1$. $\beta(\pm\alpha^{\frac{2^m+1}{3}})
    =\beta(\pm\alpha^{\frac{2(2^m+1)}{3}})=t(t-1)(p^m+4t^2-4t)+2t^2$ for $6t\mid p^m+1$. 
\item[(iii)] $\beta(2)=\beta(p-2)=0$
     for $2t\nmid p^m+1$. $\beta(2)=\beta(p-2)=t(t-1)(p^m+t^2-3t)$ for $2t\mid p^m+1$.
\item[(iv)] if $2t\nmid p^m+1$, then $\beta(1-\alpha^{si})
    =\beta(\alpha^{si}-1)
    =t(t-1)(p^m+t^2-3t)$, where $1\leq i< \frac{p^m+1}{t}$.
\item[(v)] if $2t\mid p^m+1$, then $\beta(1-\alpha^{si})
    =\beta(\alpha^{si}-1)
    =t(t-1)(p^m+4t^2-4t)$, where $1\leq i< \frac{p^m+1}{t}$, $\alpha^{si}\neq-1$, $1-\alpha^{si}\neq-\alpha^{\frac{p^m+1}{3}},-\alpha^{\frac{2(p^m+1)}{3}}$,
    $\alpha^{si}-1\neq\alpha^{\frac{p^m+1}{3}},\alpha^{\frac{2(p^m+1)}{3}}$.
\item[(vi)] $\beta(2\alpha^{si})
    =0$ for $1\leq i< \frac{p^m+1}{t}$, $2t\nmid p^m+1$. $\beta(2\alpha^{si})
    =t^4-2t^3+t^2$ for $1\leq i< \frac{p^m+1}{t}$, $\alpha^{si}\neq-1$, $2t\mid p^m+1$.
\item[(vii)] if $2t\nmid p^m+1$, then $\beta(\alpha^{si}-\alpha^{sj})
    =t^4-2t^3+t^2$, where $1\leq i,j< \frac{p^m+1}{t}$, $i\neq j$.
\item[(viii)] if $2t\mid p^m+1$, then $\beta(\alpha^{si}-\alpha^{sj})
    =4t^4-4t^3+2t^2$, where $1\leq i,j< \frac{p^m+1}{t}$, $\alpha^{si},\alpha^{sj}\neq-1$, $i\neq j$, $\alpha^{si}-\alpha^{sj}\neq\pm1$.
\item[(ix)] $\beta(b)=0$ for $b\neq0,\pm1, \alpha^{si}-\alpha^{sj}$, where $0\leq i,j< \frac{p^m+1}{t}$. \hfill$\square$
\end{itemize}
\end{lemma}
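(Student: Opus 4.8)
The plan is to adapt the method used for Lemma~\ref{bs 2 lemma}, expressing $\beta(b)$ as a weighted count of the classes $C_{j_1,j_2}$ of~(\ref{c12}). The key point is that, because $d=s(p^m-1)$ and $(p^m-1)(p^m+1)=p^n-1$, any $x\in\mathbb{F}_{p^n}^{\sharp}$ with $x\in C_{k_1,k_2}$ satisfies $x^{d}=\alpha^{sk_2}$ and $(x+1)^{d}=\alpha^{sk_1}$ (with $\alpha=\psi^{p^m-1}$), while $0^{d}=0$ and $(-1)^{d}=1$ since $p^m-1$ is even. Hence for $(x,y)\in C_{k_1,k_2}\times C_{k_3,k_4}$ the system~(\ref{boom eq f2}) collapses to the purely index-theoretic conditions $\alpha^{sk_2}-\alpha^{sk_4}=b$ and $\alpha^{sk_1}-\alpha^{sk_3}=b$, so this block contributes $|C_{k_1,k_2}|\cdot|C_{k_3,k_4}|$ when both conditions hold and $0$ otherwise; adding the finitely many solutions with $x$ or $y$ in $\{0,-1\}$ (each forcing a pair of prescribed field elements, such as $-b$ and $1-b$, to lie in $\langle\alpha^{s}\rangle\cup\{0\}$) gives
\[
\beta(b)=\sum_{(k_1,k_2,k_3,k_4)\ \text{admissible}}|C_{k_1,k_2}|\,|C_{k_3,k_4}|+(\text{boundary terms}).
\]
The bookkeeping can be cut down using the symmetries $\beta(b)=\beta(-b)$ (swap $(x,y)$) and $\beta(b)=\beta(b^{p^m})$ (apply Frobenius to $(x,y)$), which explain the coincidences appearing in the statement.

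Two ingredients then feed this formula. First, one enumerates the admissible index tuples. Applying the $p^m$-th power map (Frobenius, which inverts $\langle\alpha\rangle$) to $\alpha^{sk_1}-\alpha^{sk_3}=b$ shows that the product $\alpha^{sk_1}\cdot\alpha^{sk_3}=-b^{1-p^m}$ is forced as well, so $\alpha^{sk_1},\alpha^{sk_3}$ are the roots of a quadratic with prescribed difference and product; there are at most two ordered solutions $(\alpha^{sk_1},\alpha^{sk_3})$, and likewise for $(\alpha^{sk_2},\alpha^{sk_4})$, each being realizable precisely when both of its coordinates lie in $\langle\alpha^{s}\rangle$. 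This realizability is the source of the case distinctions, since $-1=\alpha^{(p^m+1)/2}\in\langle\alpha^{s}\rangle$ iff $2t\mid p^m+1$, and $\alpha^{(p^m+1)/3},\alpha^{2(p^m+1)/3}$ (together with $-1$) lie in $\langle\alpha^{s}\rangle$ iff $6t\mid p^m+1$. Moreover $\gcd(s,p^m+1)=t$ means each prescribed value $\alpha^{su}\in\langle\alpha^{s}\rangle$ has exactly $t$ representatives $k$ modulo $p^m+1$, so each ordered branch splits into $t^2$ index tuples. Second, one computes $|C_{j_1,j_2}|$: writing $x=\psi^{j_2}u$, $x+1=\psi^{j_1}v$ with $u,v\in\langle\psi^{p^m+1}\rangle=\mathbb{F}_{p^m}^{*}$ and pulling back through the $(p^m+1)$-to-$1$ map $w\mapsto w^{p^m+1}$, one obtains $|C_{j_1,j_2}|=(p^m+1)^{-2}\bigl(N_{p^n}(\chi)-(\text{corrections on the coordinate axes})\bigr)$ for the curve $\chi\colon \psi^{j_1}y^{p^m+1}-\psi^{j_2}x^{p^m+1}+1=0$; feeding $n_1=n_2=\gcd(n_1,n_2)=p^m+1$, $k=1$ and $\{r_1,r_2\}=\{j_1,j_2\}$ into Lemma~\ref{luo wang} yields the clean trichotomy $|C_{j_1,j_2}|=p^m-2$ if $j_1\equiv j_2\equiv0$, $\;|C_{j_1,j_2}|=1$ if $j_1\not\equiv j_2$ with both nonzero, and $|C_{j_1,j_2}|=0$ otherwise.

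With these in place, for each admissible $b$ one sums $|C_{k_1,k_2}|\,|C_{k_3,k_4}|$ over the (at most four) pairs of branches; since the two equations of~(\ref{boom eq f2}) decouple into a $(k_1,k_2)$-part and a $(k_3,k_4)$-part, each branch pair factors as $\bigl(\sum_{k_1\in A,\,k_2\in A'}|C_{k_1,k_2}|\bigr)\bigl(\sum_{k_3\in B,\,k_4\in B'}|C_{k_3,k_4}|\bigr)$, where $A,A',B,B'$ are $t$-element residue classes modulo $(p^m+1)/t$. By the trichotomy, a sum over a pair of disjoint nonzero classes is $t^2$, over a single nonzero class is $t(t-1)$ (the diagonal dies), and over a pair of $0$-containing classes is $(p^m-2)+(t-1)(t-2)=p^m+t^2-3t$; recombining the branch pairs reproduces exactly the shapes $t^4-2t^3+t^2$, $4t^4-4t^3+2t^2$, $t(t-1)(p^m+t^2-3t)$ and $t(t-1)(p^m+4t^2-4t)$ appearing in (i)--(viii), while the boundary terms add the constants $2t^2$ where they occur. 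Part~(ix) follows from (i)--(viii) together with Lemma~\ref{doubesame}: any $b$ admitting a boomerang solution is forced into the form $\alpha^{si}-\alpha^{sj}$, and the listed values exhaust these.

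I expect the main obstacle to be not any individual curve-point count --- these are immediate from Lemma~\ref{luo wang} --- but the bookkeeping: correctly identifying the admissible branches and the boundary configurations for each value of $b$, and in particular handling the degeneracies without over- or under-counting, namely the double root that occurs when $b=\pm2\alpha^{si}$, the appearance of a second branch as soon as $2t\mid p^m+1$, and the overlaps when $6t\mid p^m+1$ makes $\alpha^{(p^m+1)/3}$ itself expressible as $\alpha^{si}-\alpha^{sj}$, so that values carrying $\beta=0$ in the generic regime suddenly acquire solutions.
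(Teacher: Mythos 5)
Your proposal is correct and follows essentially the same route as the paper: the paper proves the $p=2$ analogue (Lemma \ref{bs 2 lemma}) by decomposing $\mathbb{F}_{p^n}^{\sharp}\times\mathbb{F}_{p^n}^{\sharp}$ into blocks $C_{k_1,k_2}\times C_{k_3,k_4}$, determining the admissible index tuples via the forced-product argument of Lemma \ref{doubesame}, counting each block with Lemma \ref{luo wang}, and adding the boundary solutions with $x$ or $y$ in $\{0,-1\}$ --- and it explicitly states that Lemma \ref{bs 4 lemma} is obtained by the parallel computation, which is exactly what you outline. Your explicit factorization of each block's contribution as $|C_{k_1,k_2}|\,|C_{k_3,k_4}|$ together with the trichotomy for $|C_{j_1,j_2}|$ is a clean way to organize the same bookkeeping (note only that in the mixed branch combinations the sum over a pair consisting of the zero class and a nonzero class is $t(t-1)$, not $0$, since only the single index $k\equiv 0\pmod{p^m+1}$ kills the count; this is needed to recover the coefficient $4t^2-4t$ in parts (ii) and (v)).
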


Following the framework established in Lemma \ref{bs 4 lemma}, we obtain the following.
\begin{theorem}
Let $f(x)=x^d$ defined over $\mathbb{F}_{p^n}$, where $d=s(p^m-1)$, $n=2m$, $\gcd(s,p^m+1)=t$, $p>3$ and $\frac{p^m+1}{t}>3$. For any $b\in\mathbb{F}_{p^n}^*$, the boomerang spectrum of $f$ is shown as following.

(i) For case $2t\nmid p^m+1$, Table \ref{bs41table} holds.

(ii) For case $2t\mid p^m+1$, $6t\nmid p^m+1$, Table \ref{bs42table} holds.

(iii) For case $6t\mid p^m+1$, Table \ref{bs43table} holds.
\begin{table}[H]
\centering
 \caption{$2t\nmid p^m+1$}
\label{bs41table}
\begin{tabular}{l l}
\hline
$\beta_f(1,b)$   &  Frequency  \\
[0.5ex]
\hline
$0$ & $\frac{(t^2-1)p^n-(2-t)p^m-t^2+t-1}{t^2}$ \\

$t^4-2t^3+t^2$   & $\frac{p^n+(2-3t)p^m+2t^2-3t+1}{t^2}$ \\

$t(t-1)(p^m+t^2-3t)$ & $\frac{2(p^m-t+1)}{t}$ \\
\hline
\end{tabular}
\end{table}
\begin{table}[H]
\centering
 \caption{$2t\mid p^m+1$, $6t\nmid p^m+1$}
\label{bs42table}
\begin{tabular}{l l}
\hline
$\beta_f(1,b)$   &  Frequency  \\
[0.5ex]
\hline
$0$ & $\frac{2t^2p^n-(p^m+2)p^{m}-2t^2-1}{2t^2}$ \\

$t^4-2t^3+t^2$   & $\frac{p^m-2t+1}{t}$ \\

$4t^4-4t^3+2t^2$ & $\frac{p^{n}+(2-6t)p^m+8t^2-6t+1}{2t^2}$ \\

$t(t-1)(p^m+4t^2-4t)$ & $\frac{2(p^m-2t+1)}{t}$ \\

$t(t-1)(p^m+t^2-3t)$ & $2$ \\
\hline
\end{tabular}
\end{table}
\begin{table}[H]
\centering
 \caption{$6t\mid p^m+1$}
\label{bs43table}
\begin{tabular}{l l}
\hline
$\beta_f(1,b)$   &  Frequency  \\
[0.5ex]
\hline
$0$ & $\frac{2t^2p^n-(p^m+2)p^{m}-2t^2-1}{2t^2}$ \\

$t^4-2t^3+t^2$   & $\frac{p^m-2t+1}{t}$ \\

$4t^4-4t^3+2t^2$ & $\frac{p^{n}+(2-6t)p^m+8t^2-6t+1}{2t^2}$ \\

$t(t-1)(p^m+4t^2-4t)$ & $\frac{2(p^m-4t+1)}{t}$ \\

$t(t-1)(p^m+4t^2-4t)+2t^2$ & $4$ \\

$t(t-1)(p^m+t^2-3t)$ & 2\\
\hline
\end{tabular}
\end{table}
  \hfill$\square$
\end{theorem}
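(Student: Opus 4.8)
The theorem is the combinatorial distillation of Lemma~\ref{bs 4 lemma}: that lemma already records $\beta(b)$ for \emph{every} $b\in\mathbb{F}_{p^n}^*$, so assembling Tables~\ref{bs41table}--\ref{bs43table} only requires counting, without over-counting, how many $b$ realise each value. The plan is therefore to first fix notation exposing the families in Lemma~\ref{bs 4 lemma} and then run a careful inclusion--exclusion over them. Write $h=\tfrac{p^m+1}{t}$; since $\alpha=\psi^{p^m-1}$ has order $p^m+1$ and $\gcd(s,p^m+1)=t$, the set $G:=\{\alpha^{si}:0\le i<h\}$ equals the order-$h$ subgroup $\langle\alpha^t\rangle$, and $h>3$ by hypothesis. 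Lemma~\ref{bs 4 lemma} says $\beta(b)\ne0$ forces $b$ into $\{\pm1,\pm2\}\cup\{\pm\alpha^{(p^m+1)/3},\pm\alpha^{2(p^m+1)/3}\}\cup\{\pm(1-\alpha^{si})\}\cup\{2\alpha^{si}\}\cup\{\alpha^{si}-\alpha^{sj}\}$, with $\beta$ constant on each family; hence each non-zero row of the target table is a count of distinct $b$ in one family not already accounted for with a different value, and the $\beta=0$ row is forced by $\sum_i\nu_i=p^n-1$ (bearing in mind that some $b$ lying in a family, e.g.\ $2\alpha^{si}$ when $2t\nmid p^m+1$, nevertheless have $\beta(b)=0$ and so feed that row).

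The three cases (i)--(iii) are exactly the divisibility regimes deciding which of the special points $\pm1,\pm2,\pm\alpha^{\pm(p^m+1)/3}$ lie in $G$: since $p$ is odd, $-1=\alpha^{(p^m+1)/2}$, so $-1\in G$ iff $2t\mid p^m+1$, and, as recorded in Lemma~\ref{bs 4 lemma}, $6t\mid p^m+1$ is the condition governing the cube-root points $\pm\alpha^{\pm(p^m+1)/3}$. When a special point lies in $G$ it is a member of one of the generic families $2\alpha^{si}$, $\pm(1-\alpha^{si})$, $\alpha^{si}-\alpha^{sj}$ but is assigned there a different $\beta$-value by Lemma~\ref{bs 4 lemma} (the extra constants in Tables~\ref{bs42table}, \ref{bs43table}), so it must be split off into its own row; otherwise its $\beta$-value is $0$. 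The proof thus opens by fixing the regime and listing the finitely many indices $i$ for which $\alpha^{si}$ is special; this bounded exceptional set is what produces the constant-frequency rows.

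The analytic input underlying all the counts is the injectivity of the relevant difference maps, and the tool is Lemma~\ref{doubesame}. With $\alpha$ of order $p^m+1$ it reads: $\alpha^a+\alpha^b=\alpha^c+\alpha^d$ (exponents mod $p^m+1$) forces $\{a,b\}=\{c,d\}$ whenever $\alpha^a+\alpha^b\ne0$; and since $-\alpha^{sj}=\alpha^{sj+(p^m+1)/2}$ this governs every equality among the $\alpha^{si}-\alpha^{sj}$, among $1-\alpha^{si}$ and $\alpha^{si}-1$, and across these families and $\pm1,\pm2$. Carrying it out gives, e.g.\ in case~(i), that the $h-1$ values $1-\alpha^{si}$ and the $h-1$ values $\alpha^{si}-1$ ($1\le i<h$) are pairwise distinct and disjoint from $\{\alpha^{si}-\alpha^{sj}:i\ne j\}$, so the value $t(t-1)(p^m+t^2-3t)$ has frequency $2(h-1)=\tfrac{2(p^m-t+1)}{t}$, while each remaining difference $\alpha^{si}-\alpha^{sj}$ is hit the right number of times to give frequency $\tfrac{p^n+(2-3t)p^m+2t^2-3t+1}{t^2}$ for $t^4-2t^3+t^2$; cases (ii), (iii) are the same computation after deleting the special indices from the generic families and tallying them separately.

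The main obstacle I expect is precisely this collision bookkeeping: in the presence of the $\pm$ signs and the special points, tracking which $b$ is counted by which row and confirming the displayed frequencies are mutually consistent --- in particular that each fraction in Tables~\ref{bs41table}--\ref{bs43table} is a non-negative integer, which reduces to divisibilities such as $t^2\mid(p^m+1)(p^m+1-t)$ that hold because $t\mid p^m+1$. A convenient global check is $\sum_i\nu_i=p^n-1$; one may also cross-check against $\sum_{b\in\mathbb{F}_{p^n}^*}\beta_f(1,b)=\sum_i i^2\omega_i-\beta_f(1,0)$, using the differential-spectrum frequencies $\omega_i$ from Section~\ref{luo the ds s}. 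Since the structure is identical to the $p=2$ and $p=3$ cases already treated, I would write out only the $2t\nmid p^m+1$ case in full and indicate the modifications needed for the other two.
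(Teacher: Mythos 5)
Your proposal is correct and follows essentially the same route as the paper: the paper states this theorem with no written proof beyond the remark that it follows from Lemma~\ref{bs 4 lemma}, and the intended derivation (as shown explicitly for the analogous differential-spectrum theorems) is exactly your plan --- count the distinct $b$ in each family of Lemma~\ref{bs 4 lemma} using Lemma~\ref{doubesame}, split off the finitely many special points $\pm1,\pm2,\pm\alpha^{\pm(p^m+1)/3}$ according to the divisibility regime, and recover the $\beta=0$ frequency from $\sum_i\nu_i=p^n-1$.
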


We present examples confirming our findings.
\begin{example}\rm{
Let $f(x)=x^d=x^{3(3^4-1)}$ defined over $\mathbb{F}_{3^{8}}$. Then the boomerang spectrum of $f=x^d$ is given by
\begin{equation*}
 \begin{aligned}
  \mathbb{BS}=
  \bigg\{\nu_0=3440,\nu_{2}=3120\bigg\}.
   \end{aligned}
\end{equation*} }
\end{example}
\begin{example}\rm{
Let $f(x)=x^d=x^{2(7^2-1)}$ defined over $\mathbb{F}_{7^{4}}$. Then the boomerang spectrum of $f=x^d$ is given by
\begin{equation*}
 \begin{aligned}
  \mathbb{BS}=
  \bigg\{\nu_0=1800,\nu_{4}=552,\nu_{94}=48\bigg\}.
   \end{aligned}
\end{equation*}}
\end{example}

\section{Conclusion}\label{conclusion}
In this paper, we completely determined the differential spectrum and boomerang spectrum of the power mapping $f(x)=x^{s(p^m-1)}$ over $\mathbb{F}_{p^n}$ with $n=2m$ and $\gcd(s,p^m+1)\geq1$. From the results, our findings on the differential spectrum and boomerang spectrum of $f$ extended the results in \cite{kpm-1, bu2} from $t=1$ to general case. Methodologically, we determined the differential spectrum and boomerang spectrum of $f$ by analyzing the number of rational points on some curves, which is different from the method in \cite{kpm-1, bu2}.

\end{document}